\DeclareMathOperator{\dt}{\textbf{.}} 
\DeclareSymbolFont{rsfscript}{OMS}{rsfs}{m}{n}
\DeclareSymbolFontAlphabet{\mathrsfs}{rsfscript}
\newcommand{\Fact}{\textit{Fact}}
\newcommand{\Pref}{\textit{Pref}}
\newcommand{\Suff}{\textit{Suff}}
\begin{document}

\title{On Non-Complete Sets and Restivo's Conjecture}

\author{Vladimir V. Gusev, Elena V. Pribavkina}

\titlerunning{On Non-Complete Sets and Restivo's Conjecture}
\authorrunning{V. Gusev, E. Pribavkina}

\institute{Ural State University, Lenina st. 51, 620083, Ekaterinburg, Russia\\
\email{vl.gusev@gmail.com, elena.pribavkina@usu.ru}}

\maketitle

\begin{abstract}
A finite set $S$ of words over the alphabet $\Sigma$ is called non-complete if $\Fact(S^*)\ne\Sigma^*$. A word $w\in\Sigma^*\setminus\Fact(S^*)$ is said to be uncompletable.  We present a series of non-complete sets $S_k$ whose minimal uncompletable words have length $5k^2 - 17k + 13$, where $k\ge 4$ is the maximal length of words in $S_k$. This is an infinite series of counterexamples to Restivo's conjecture, which states that any non-complete set possesses an uncompletable word of length at most $2k^2$.
\end{abstract}

\section{Introduction}
Let $\Sigma$ be a finite alphabet. A finite set $S$ of words over the alphabet $\Sigma$ is called \emph{complete} if $\Fact(S^{*})=\Sigma^*$, i.e. every word over the alphabet $\Sigma$ is a factor of a word of $S^{*}$. If $S$ is not complete, $\Sigma^{*}\setminus \Fact(S^{*})$ is not empty and a word in this set of minimal length is called a \emph{minimal uncompletable word} (with respect to the non-complete set $S$). Its length will be denoted by $uwl(S)$.

The problem of finding minimal uncompletable words and their length was introduced in 1981 by Restivo.
In his paper   \cite{Rest} he conjectured that a non-complete set $S$ always possesses an uncompletable word $w$ of length at most $2k^2$, where $k$ is the maximal length of words in $S$, and $w$ is of the form $w=uv_1uv_2\cdots uv_{k-1}u$, where $u\notin S$, $|u|=k$ and $|v_i|\le k$ for all $i=1,2,\ldots,k-1$. An example giving a lower bound $k^2 + k - 1$ for the length of minimal uncompletable words was presented in \cite{Prib09}. However Restivo's conjecture appeared to be false by means of a counterexample found in \cite{FiciPribSakar10}. Namely, let $k>6$ and let $R_k=\Sigma^k\setminus \{a^{k-2}bb\} \cup \Sigma ba^{k-4}\Sigma \cup \Sigma ba \cup b^{4}\cup J_{k}$,
where $J_{k}=\bigcup_{i=1}^{k-3}(ba^{i}\Sigma \cup a^{i}b)$. In \cite{FiciPribSakar10} the authors computed for $7\leq k\leq 12$ that the length of a minimal uncompletable word for $R_k$ is equal to $3k^2 - 9k + 1$ but were unable to prove it in general.

In this paper we present a new series of non-complete sets $S_k$ whose minimal uncompletable
words have length $5k^2 - 17k + 13$ for $k \ge 4$.

As far as the upper bound is concerned, only trivial exponential one is known. More precisely, the length of a minimal uncompletable word is at most $2^{\|S\|-m+1}$, where $m$ is the number of elements in $S$ and $\|S\|$ is the sum of lengths of all elements in $S$. It comes from the connection between non-complete sets and synchronizing automata studied in \cite{Prib09}. However this bound is not likely to be precise.

An interesting related question of deciding whether a given regular language $L$ satisfies one of the properties $\Sigma^*=\Fact(L)$, $\Sigma^*=\Pref(L)$, $\Sigma^*=\Suff(L)$ has been recently considered by Rampersad et al.\ in \cite{RamSha09}, where the computational complexity of the aforementioned problems in case $L$ is represented by a deterministic or non-deterministic finite automaton is studied. In particular case $L=S^{*}$ for $S$ being a finite set of words the authors mention that the complexity of deciding whether or not $\Sigma^*=\Fact(S^{*})$ is still an open problem.

\section{The set $S_k$}

To fix the notation, let us recall some basic definitions from combinatorics on words.
By $|w|$ we denote the \emph{length} of a word $w$. The length of the \emph{empty word} $\varepsilon$ is equal to zero.
By $\Sigma^+$ we denote the set of all non-empty words over the alphabet $\Sigma$;
by $\Sigma^k$ -- the set of all words of length $k$ over $\Sigma$ and by $\Sigma^{\le k}$ -- the set of all words of length at most $k$ over $\Sigma$.
A word $u\in \Sigma^+$ is a \emph{factor} of $w$
(\emph{prefix} or \emph{suffix} respectively) if $w$ can be decomposed as
 $w=xuy$ ($w=uy$ or $w=xu$ respectively)
for some $x,y\in \Sigma^*$. A factor (prefix, suffix) $u$ of
$w$ is called \emph{proper} if $u\ne w.$
Given a word $u=a_1a_2\cdots a_n\in \Sigma^+$ by $u[i\ldots j]$ with
$1\le i,j\le n$ we denote the factor $a_ia_{i+1}\cdots a_{j}$
if $i\le j$, and the empty word if $i>j$. Moreover, we put $u[0]=\varepsilon$.

Let $\Sigma=\{a,b\}$. Consider the set $$S_k=\left(\Sigma^k\setminus\{ba^{k-1},b^{k-1}a\}\right)\cup\left(\Sigma^{k-1}\setminus\{a^{k-1},b^{k-1}\}\right).$$
In section~\ref{upper} we show that this set is not complete for $k\ge4$ and possesses an uncompletable word of length $5k^2-17k+13$. In section~\ref{lower} we show that this upper bound is precise. Our results considerably rely upon the notion of a \emph{forbidden position} in a word. This notion was introduced in \cite{Prib09}. Let $S$ be any non-complete set and let $w$ be an uncompletable word for the set $S$. We say that $0\le j\le |w|-1$ is a forbidden position in $w$ with respect to $S$, if $w[j+1,\ldots,|w|]\notin\Pref(S^*)$, i.e. the suffix of the word $w$ starting from position $j$ is not a prefix of any word in $S^*$. If the set $S$ is clear from context we will omit reference to $S$. Note that, if $S\subseteq\Sigma^{\le k}$ and positions $0,1,\ldots,k-1$ are forbidden in some word $w$, then $w$ is uncompletable for $S$. So to prove that a set $S\subseteq\Sigma^{\le k}$ is not complete, it is enough to find a word with first $k$ forbidden positions.


\begin{lemma}\label{FP_struct} Let $j\ge k$ be a forbidden position in a word $w$ with respect to $S_k$. Then the position $j-k$ is forbidden in $w$ with respect to $S_k$ iff either $j-1$ is forbidden or $w[j-k+1,\ldots,j-1]\in\{a^{k-1},b^{k-1}\}$.
\end{lemma}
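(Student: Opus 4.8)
The plan is to study the suffix $u = w[j-k+1,\ldots,|w|]$ and to relate its membership in $\Pref(S_k^*)$ to that of two shorter suffixes. Write $u = pv$, where $p = w[j-k+1,\ldots,j]$ has length exactly $k$ and $v = w[j+1,\ldots,|w|]$. Since $j \le |w|-1$, the suffix $u$ has length at least $k+1$, so $u$ is strictly longer than every word of $S_k$; in particular $u$ cannot be a proper prefix of a single element of $S_k$.

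First I would record the elementary decomposition principle: if $|x| > k$, then $x \in \Pref(S_k^*)$ if and only if $x = s\,x'$ for some $s \in S_k$ that is a prefix of $x$ and some $x' \in \Pref(S_k^*)$. One direction is clear; for the other, take the longest product of elements of $S_k$ that is a prefix of $x$ and peel off its first factor, which exists because $|x|$ exceeds the maximal length $k$ of a word in $S_k$. Applying this to $u$ and using $S_k \subseteq \Sigma^{k-1} \cup \Sigma^k$, the only possible first factors are the length-$k$ prefix $p$ (admissible iff $p \in S_k$, with remainder $v$) and the length-$(k-1)$ prefix $w[j-k+1,\ldots,j-1]$ (admissible iff it lies in $S_k$, with remainder $w[j,\ldots,|w|]$). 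Hence $u \in \Pref(S_k^*)$ iff either $p \in S_k$ and $v \in \Pref(S_k^*)$, or $w[j-k+1,\ldots,j-1] \in S_k$ and $w[j,\ldots,|w|] \in \Pref(S_k^*)$.

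The hypothesis that $j$ is forbidden means precisely $v = w[j+1,\ldots,|w|] \notin \Pref(S_k^*)$, which discards the first alternative outright. Thus $u \in \Pref(S_k^*)$ iff $w[j-k+1,\ldots,j-1] \in S_k$ and $w[j,\ldots,|w|] \in \Pref(S_k^*)$. It remains to translate the two surviving conditions. The factor $w[j-k+1,\ldots,j-1]$ has length $k-1$, so it belongs to $S_k$ exactly when it is not one of the two excluded words, i.e.\ when $w[j-k+1,\ldots,j-1] \notin \{a^{k-1}, b^{k-1}\}$; and $w[j,\ldots,|w|] \in \Pref(S_k^*)$ says exactly that position $j-1$ is not forbidden. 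Negating the resulting biconditional shows that $j-k$ is forbidden, i.e.\ $u \notin \Pref(S_k^*)$, if and only if $w[j-k+1,\ldots,j-1] \in \{a^{k-1}, b^{k-1}\}$ or $j-1$ is forbidden, which is the assertion of the lemma.

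The only genuine work lies in the decomposition step; after the two candidate first factors are isolated, the argument reduces to bookkeeping about which length-$(k-1)$ and length-$k$ words were removed from $S_k$. I expect the main obstacle to be making the first-factor argument airtight: one must verify that when $|u| > k$ no membership in $\Pref(S_k^*)$ can come from $u$ being a proper prefix of a single, necessarily shorter, word of $S_k$, so that a complete first factor of length $k-1$ or $k$ is forced to appear. This is exactly where the inequality $|u| \ge k+1$ and the inclusion $S_k \subseteq \Sigma^{k-1} \cup \Sigma^k$ are used.
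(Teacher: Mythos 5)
Your proof is correct and takes essentially the same route as the paper's: both hinge on the observation that a suffix in $\Pref(S_k^*)$ of length at least $k$ must begin with a factor from $S_k$ of length $k$ or $k-1$, and that the length-$k$ case is ruled out because position $j$ is forbidden. You merely package the two directions as one biconditional and negate, where the paper runs two separate contradiction arguments with the same content.
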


\begin{proof}  Let $j-k$ be forbidden in $w$. Then by definition $w[j-k+1,\ldots,|w|]\notin\Pref(S_k^*)$. Suppose $j-1$ is not forbidden in $w$, i.e.  $w[j,\ldots,|w|]=x\in\Pref(S^*_k)$. If the factor $y=w[j-k+1,\ldots,j-1]$ of length $k-1$ is in $S_k$, then $w[j-k+1,\ldots,|w|]=yx\in\Pref(S_k^*)$, which is a contradiction. Hence $y\in\Sigma^{k-1}\setminus S_k=\{a^{k-1},b^{k-1}\}$.

Conversely, arguing by contradiction suppose $j-k$ is not forbidden. Then since the length of the suffix $w[j-k+1,\ldots,|w|]$ is at least $k$, it can be factorized as  $xy$ where $x\in S_k$ and $y\in\Pref(S^*_k)$. The case $|x|=k$ contradicts the condition that $j$ is forbidden, since we get $w[j+1,\ldots,|w|]=y\in\Pref(S^*_k)$. Hence $|x|=k-1$ and since $x\in S_k$ we have that $x$ is different both from $a^{k-1}$ and $b^{k-1}$. But then position $j-1$ is not forbidden, which is a contradiction.


\end{proof}

In the rest of the paper we strictly fix the following notation:  $u=ba^{k-1}$ and $v=b^{k-1}a$. We will consider forbidden positions only in occurrences of $u$ and $v$ in $w$. In each such occurrence for convenience we will enumerate forbidden positions locally from $0$ to $k-1$.

\begin{example}
Consider the following word: $$w=\underbrace{'b'a'a}\limits_{\{0,1,2\}}\, a'b\, \lefteqn{\underbrace{\phantom{'bb'a}}\limits_{\{0,2\}}}'b\overbrace{b'a a}\limits^{\{1\}}a\, \underbrace{'bba}\limits_{\{0\}}.$$

Using definition and lemma~\ref{FP_struct} it is easy to calculate the set of its forbidden positions with respect to $S_3$: $\{0,1,2,4,5,7,10\}$.
There are two occurrences of $u$ and two occurrences of $v$ in $w$ (the first occurrence of $v$ overlaps with the second occurrence of $u$).
Locally enumerated sets of forbidden positions are: $\{0,1,2\}$ in the first occurrence of $u$, $\{1\}$ in the second occurrence of $u$, $\{0,2\}$ in the first occurrence of $v$, and $\{0\}$ in the second occurrence of $v$. Note that, since first three positions are forbidden in $w$, this word is uncompletable for $S_3$.
\end{example}

Position $0$ may be forbidden in an occurrence of $u$ or $v$. The following statement gives necessary and sufficient conditions for this to happen. It is an easy consequence of lemma~\ref{FP_struct}.

\begin{lemma}
\label{FP_zero}
Position $0$ is forbidden in an occurrence of $u$ in a word $w$ iff position $k - 1$ is forbidden in the same occurrence of $u$.
Position $0$ is forbidden in any occurrence of $v$ in $w$.
\end{lemma}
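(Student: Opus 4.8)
The plan is to translate both assertions into statements about membership in $\Pref(S_k^*)$ and then to reason about the first factor of a hypothetical $S_k^*$-factorization. Fix an occurrence of $u$ or $v$ as $w[t+1,\ldots,t+k]$ and write $z=w[t+k+1,\ldots,|w|]$ for the part of $w$ following it. Under the local enumeration, local position $0$ of the occurrence corresponds to the global position $t$, so it is forbidden exactly when the suffix $w[t+1,\ldots,|w|]$ is \emph{not} in $\Pref(S_k^*)$, while local position $k-1$ corresponds to the global position $t+k-1$, so it is forbidden exactly when the suffix $w[t+k,\ldots,|w|]$ is not in $\Pref(S_k^*)$. Thus for an occurrence of $u=ba^{k-1}$ I must compare $ba^{k-1}z$ with $az$, and for an occurrence of $v=b^{k-1}a$ I must examine $b^{k-1}az$. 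The recurring tool is that every element of $S_k$ has length $k-1$ or $k$, so any factor that starts with $b$ in a factorization is forced to be either the length-$(k-1)$ or the length-$k$ prefix of the word being factorized.

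I would dispose of the $v$-assertion first, since it is unconditional. Suppose toward a contradiction that $b^{k-1}az\in\Pref(S_k^*)$, say $b^{k-1}azr=s_1s_2\cdots s_m$ with $s_i\in S_k$. The first factor $s_1$ begins with $b$ and has length $k-1$ or $k$; but the length-$(k-1)$ prefix of $b^{k-1}azr$ is $b^{k-1}$ and its length-$k$ prefix is $b^{k-1}a=v$, and by construction neither $b^{k-1}$ nor $v$ lies in $S_k$. This contradiction shows $b^{k-1}az\notin\Pref(S_k^*)$ for every $z$, i.e.\ position $0$ is forbidden in every occurrence of $v$.

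For the $u$-assertion I would prove the two directions separately. If position $k-1$ is not forbidden, i.e.\ $az\in\Pref(S_k^*)$, then since $ba^{k-2}\in S_k$ (it differs from both $a^{k-1}$ and $b^{k-1}$ because $k\ge4$) the decomposition $ba^{k-1}z=(ba^{k-2})(az)$ shows $ba^{k-1}z\in\Pref(S_k^*)$, so position $0$ is not forbidden. Conversely, if position $0$ is not forbidden, take a factorization witnessing $ba^{k-1}z\in\Pref(S_k^*)$; its first factor starts with $b$, and length $k$ is impossible because the length-$k$ prefix equals $ba^{k-1}=u\notin S_k$, so the first factor must be the length-$(k-1)$ prefix $ba^{k-2}$. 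Peeling it off leaves $az\in\Pref(S_k^*)$, so position $k-1$ is not forbidden. Together these give the stated equivalence. I would also remark that this equivalence is precisely what lemma~\ref{FP_struct} delivers when applied at the position $t+k$ just after the occurrence, once one observes that the intervening factor $w[t+1,\ldots,t+k-1]=ba^{k-2}$ does not belong to $\{a^{k-1},b^{k-1}\}$.

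The only real care lies in the first-factor case analysis: one must use that $S_k$ has no words shorter than $k-1$, so a factor starting with $b$ can only be the length-$(k-1)$ or the length-$k$ prefix, and then that the two excluded length-$k$ words are exactly $u$ and $v$ while the two excluded length-$(k-1)$ words are exactly $a^{k-1}$ and $b^{k-1}$. I expect this bookkeeping, together with a quick check of the boundary situations — namely $z=\varepsilon$ when the occurrence ends $w$, and a factorization terminating exactly at the occurrence — to be the main point requiring attention, although each such case is handled by the same argument without change.
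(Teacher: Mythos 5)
Your proof is correct and complete. Note that the paper gives no proof of this lemma at all---it is dismissed as ``an easy consequence of lemma~\ref{FP_struct}''---so the natural comparison is with that claim. Your route is a direct, self-contained factorization argument: the first factor of any $S_k^*$-factorization must be the prefix of length $k-1$ or $k$, and for $v$ both such prefixes ($b^{k-1}$ and $v$) lie outside $S_k$, while for $u$ the length-$k$ prefix $u$ is excluded but $ba^{k-2}\in S_k$, giving the equivalence with position $k-1$. This is in substance the proof of lemma~\ref{FP_struct} re-run with its hypothesis ``$j$ is forbidden'' replaced by the structure of the occurrence itself, and that replacement is exactly what makes your version rigorous: lemma~\ref{FP_struct} as stated cannot literally be invoked at $j=t+k$, since that requires $t+k$ to be a valid position of $w$ (false when the occurrence is a suffix of $w$) and, worse, to be a forbidden one, which is not among the hypotheses of lemma~\ref{FP_zero}; in lemma~\ref{FP_struct} that hypothesis is precisely what rules out a first factor of length $k$, whereas here the exclusions $u,v\notin S_k$ do that job. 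So only your final aside---that the equivalence ``is precisely what lemma~\ref{FP_struct} delivers''---overstates what the cited statement licenses; your main argument does not rely on it, and in fact fills the small gap the paper leaves open.
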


Two occurrences $p,q\in \{u,v\}$ in a word $w$ are said to be \emph{consecutive} if they either overlap or are the only occurrences from $\{u,v\}$ in the factor $pxq$ of $w$.

\begin{lemma}
\label{FP_tranform}
Let $p,q\in \{u,v\}$ be two consecutive occurrences without overlap in a word $w$, and let $pxq$ be a factor of $w$ with $|x|\ge0$. Let $F_p$ and $F_q$ be the sets of forbidden positions in $p$ and $q$ respectively. Then $$F_{p}\subseteq\{j+|x|\mod k\mid j\in F_q\}\cup\{0\}.$$
\end{lemma}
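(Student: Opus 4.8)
The plan is to fix global coordinates, reduce the inclusion to a single implication about one distinguished position of $q$, and then prove that implication by its contrapositive, propagating non-forbiddenness from $q$ downwards to $p$ along an arithmetic progression of positions with common difference $k$.

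First I would set up coordinates. Write $p=w[\alpha+1,\ldots,\alpha+k]$ and $q=w[\beta+1,\ldots,\beta+k]$, so that $\beta=\alpha+k+|x|$, and note that the local position $\ell$ of $p$ (resp.\ of $q$) is the global position $\alpha+\ell$ (resp.\ $\beta+\ell$). Since $0$ is always available on the right-hand side, it suffices to treat $i\in F_p$ with $i\ge1$; for such $i$ put $j=(i-|x|)\bmod k\in\{0,\ldots,k-1\}$, which is exactly the index satisfying $i=(j+|x|)\bmod k$. Thus the statement reduces to the implication: if the global position $\alpha+i$ is forbidden, then the global position $\beta+j$ is forbidden. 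A direct computation gives $\beta+j-(\alpha+i)=k+(|x|+j-i)$, and since $|x|+j-i\equiv0\pmod k$ while $|x|+j-i\ge-(k-1)$, this difference is a positive multiple $mk$ of $k$ with $m\ge1$. Hence the two positions are joined by the chain $g_t=\alpha+i+tk$, $t=0,\ldots,m$, with $g_0=\alpha+i$ and $g_m=\beta+j$.

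I would then prove the contrapositive, starting from the elementary observation that non-forbiddenness propagates downwards across a length-$k$ factor lying in $S_k$: if a position $J$ is not forbidden and $w[J-k+1,\ldots,J]\in S_k$, then $w[J-k+1,\ldots,|w|]=w[J-k+1,\ldots,J]\cdot w[J+1,\ldots,|w|]\in S_k\cdot\Pref(S_k^*)\subseteq\Pref(S_k^*)$, so $J-k$ is not forbidden either. Applying this repeatedly along the chain, it remains only to verify that every intermediate factor $w[g_{t-1}+1,\ldots,g_t]$ (of length $k$) belongs to $S_k$, i.e.\ equals neither $u$ nor $v$; granting this, $g_m=\beta+j$ not forbidden forces $g_0=\alpha+i$ not forbidden, which is precisely the desired contrapositive.

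The heart of the argument, and the step I expect to be the main obstacle, is this verification. If such a factor were equal to $u$ or $v$, it would be an occurrence of a word from $\{u,v\}$ starting at the global position $g_{t-1}+1$. These starting positions range over $\alpha+i+1,\ldots,\beta+j-k+1$, and from $1\le i\le k-1$ and $0\le j\le k-1$ one checks $\alpha+2\le g_{t-1}+1\le\beta$, so every such occurrence would lie entirely inside the factor $pxq$ and would start strictly between the starting positions $\alpha+1$ of $p$ and $\beta+1$ of $q$. This contradicts the hypothesis that $p$ and $q$ are consecutive without overlap, for then $p$ and $q$ are the only occurrences from $\{u,v\}$ in $pxq$. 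Hence no intermediate factor is $u$ or $v$, the descent goes through, and the inclusion follows. The genuine work is the careful bookkeeping of these index inequalities, in particular checking the boundary case $|x|=0$ and the extreme values of $i$ and $j$; everything else is routine.
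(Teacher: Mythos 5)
Your proof is correct and takes essentially the same approach as the paper: the paper also argues via the contrapositive, connecting position $i$ in $p$ to position $j$ in $q$ by a factor of length divisible by $k$ whose length-$k$ blocks all lie in $S_k$ because $p$ and $q$ are consecutive without overlap. The only cosmetic difference is that the paper treats this factor as a single element of $S_k^+$, while you propagate non-forbiddenness one block at a time with explicit index bookkeeping.
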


\begin{proof}
Consider a forbidden position $i$ in $p$ such that $i\ne0$ and consider the factor $y$ of length multiple of $k$ in $w$ from position $i$ in $p$ to some position $j$ in $q$ ($0\le j<k$). This factor is in $S_k^+$, since $p$ and $q$ are consecutive occurrences of words from $\Sigma^k \setminus S_k$.  Thus, if position $j\notin F_q$, then neither position $i$ is forbidden. On the one hand, we have $|y|\equiv0\mod k$, on the other hand $|y|=k-i+|x|+j$, hence $i\equiv j+|x|\mod k$.
\end{proof}

Note that, two words from $\Sigma^k \setminus S_k$ overlap only in case of $v$ and $u$. More precisely, two last letters of $v$ overlap with first two letter of $u$ leading to the word $b^{k - 1}a^{k - 1}$. The following statement can be easily proved using the same argument as in the previous lemma.

\begin{lemma}
\label{FP_transform_overlap}
Let $v$ and $u$ be two consecutive overlapping occurrences in a word $w$, and let $F_v$ and $F_u$ be the corresponding sets of forbidden positions. Then $$F_{v}\subseteq\{j - 2\mod k\mid j\in F_u\}\cup\{0\}.$$
\end{lemma}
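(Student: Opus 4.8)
The plan is to mirror the proof of Lemma~\ref{FP_tranform}, adapting the counting argument to the overlapping configuration. Recall that when $v$ and $u$ overlap, the two trailing letters of $v=b^{k-1}a$ coincide with the two leading letters of $u=ba^{k-1}$, producing the block $b^{k-1}a^{k-1}$. This means that if we enumerate positions locally, position $j$ in $u$ sits exactly $k-2$ letters to the right of the aligned start of $v$; equivalently, position $i$ in $v$ and position $j$ in $u$ are separated by a span whose length I will compute explicitly. The overlap effectively plays the role of the gap $x$ in the previous lemma, but with $|x|=-2$ in the sense that the occurrences are shifted by $k-2$ rather than some nonnegative offset.

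First I would fix a forbidden position $i\ne 0$ in $v$ and look at the factor $y$ running from position $i$ in the occurrence of $v$ to some position $j$ in the occurrence of $u$, with $0\le j<k$. As in the overlap structure the entire block $b^{k-1}a^{k-1}$ together with the surrounding letters lies in $\Sigma^k\setminus S_k$ only at the two occurrences themselves, so the intermediate factor decomposes into words of $S_k$ and hence $y\in S_k^+$, forcing $|y|\equiv 0\pmod k$. Next I would write $|y|$ in terms of $i$, $j$, and the overlap. The length from position $i$ in $v$ to the end of $v$ is $k-i$; since $u$ starts $k-2$ positions after the start of $v$, position $j$ in $u$ is at absolute offset $(k-2)+j$ from the start of $v$, so $|y|=(k-2)+j-i = k-i+j-2$. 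Setting this $\equiv 0\pmod k$ gives $i\equiv j-2\pmod k$.

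The contrapositive then finishes it exactly as before: if $j\notin F_u$ then position $j$ in $u$ is not forbidden, meaning the suffix from that point is in $\Pref(S_k^*)$; prepending $y\in S_k^+$ keeps it in $\Pref(S_k^*)$, so position $i$ in $v$ is not forbidden either. Hence every forbidden $i\ne 0$ in $v$ arises as $i\equiv j-2\pmod k$ for some $j\in F_u$, which together with the always-possible $i=0$ yields the claimed inclusion $F_v\subseteq\{\,j-2\bmod k\mid j\in F_u\,\}\cup\{0\}$.

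I expect the only genuine obstacle to be bookkeeping the local offset correctly: one must be careful that the overlap contributes exactly $k-2$ to the shift, and that the factor $y$ really does start and end at boundaries compatible with an $S_k^+$ factorization (in particular that the letters strictly between the two occurrences, if any, do not hide a third occurrence of $u$ or $v$). Since $p,q$ are assumed consecutive, no such hidden occurrence exists, so the factorization into $S_k$-blocks is legitimate, and the congruence computation is the whole content of the argument. The separate clause $i=0$ is free because position $0$ is always forbidden in any occurrence of $v$ by Lemma~\ref{FP_zero}, so no extra work is needed there.
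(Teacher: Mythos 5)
Your proposal is correct and is exactly the paper's intended argument: the paper proves this lemma by remarking that it follows from the same counting argument as Lemma~\ref{FP_tranform}, which is precisely what you carry out, with the overlap contributing the shift of $k-2$ so that $|y|=(k-2)+j-i\equiv 0 \pmod k$ yields $i\equiv j-2\pmod k$. The only cosmetic caveat is that for $i\in\{k-2,k-1\}$ the factor $y$ is empty (the positions in $v$ and $u$ coincide), so one should say $y\in S_k^*$ rather than $S_k^+$ there; the conclusion is trivial in that case.
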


Previous lemmas allow us to make the following observation. Let $p, q \in \{u,v\}$ be two consecutive occurrences in $w$. Then forbidden positions in $p$ except $0$ are inherited from forbidden positions in $q$, and position $0$ may appear in $F_p$ according to lemma~\ref{FP_zero}. In our proofs we will trace backwards forbidden positions only in occurrences of words from $\Sigma^k \setminus S_k$ starting from the last one. Besides, the number of forbidden positions in consecutive occurrences increases by at most~$1$.

\section{Upper bound for $uwl(S_k)$}
\label{upper}

In this section we prove that the set $S_k$ is non-complete by presenting an uncompletable word $w$ of length $5k^2-17k+13$ for $k\ge 4$.

\begin{theorem}
\label{non-complete}
For $k\ge4$ the set $S_k$ is not complete and there exists an uncompletable word of length $5k^2-17k+13$.
\end{theorem}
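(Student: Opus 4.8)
The plan is to exhibit an explicit word $w$ and to reduce the whole problem to producing a single block of $k$ consecutive forbidden positions. Indeed, suppose that for some $P\ge1$ all of the positions $P,P+1,\ldots,P+k-1$ are forbidden in $w$. Applying Lemma~\ref{FP_struct} with $j=P+k-1\ge k$, the neighbour $j-1=P+k-2$ is forbidden, so $j-k=P-1$ is forbidden as well; the positions $P-1,P,\ldots,P+k-2$ now again form a window of $k$ consecutive forbidden positions, and iterating downward (the top of the window stays $\ge k$ until we reach $j=k$, at which point $j-1=k-1$ is forbidden) shows that $0,1,\ldots,k-1$ are all forbidden. By the remark following the definition of forbidden position, $w$ is then uncompletable. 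So it suffices to build $w$ containing one occurrence of $u$ or $v$ all of whose $k$ local positions are forbidden.

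First I would construct $w$ as a chain of occurrences of $u=ba^{k-1}$ and $v=b^{k-1}a$ separated by carefully chosen gaps (with the $v$--$u$ overlap $b^{k-1}a^{k-1}$ used where convenient), arranged so that these are the only occurrences of $\Sigma^k\setminus S_k$ and that any two traced occurrences are \emph{consecutive} in the sense required by Lemmas~\ref{FP_tranform} and~\ref{FP_transform_overlap}. I would take the rightmost occurrence to be $v$, whose position $0$ is forbidden by Lemma~\ref{FP_zero}, giving the seed $F=\{0\}$. Tracing forbidden positions backwards (right to left) with Lemmas~\ref{FP_tranform} and~\ref{FP_transform_overlap}, the set in each occurrence is a cyclic shift of the set in its right neighbour, while Lemma~\ref{FP_zero} re-supplies position $0$ at every occurrence. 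Choosing the gaps so that each shift is by $+1\pmod k$, the forbidden sets grow as $\{0\}\subseteq\{0,1\}\subseteq\{0,1,2\}\subseteq\cdots$, one new residue per occurrence, so that after about $k$ occurrences the leftmost occurrence acquires the full set $\{0,1,\ldots,k-1\}$, i.e.\ the desired window.

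The hard part will be the \textbf{lower bound}: Lemmas~\ref{FP_tranform} and~\ref{FP_transform_overlap} only give $F_p\subseteq\cdots$, so I must still prove that each position I claim is \emph{actually} forbidden. Position $0$ of every occurrence is guaranteed by Lemma~\ref{FP_zero}, but the shifted positions require a direct argument with Lemma~\ref{FP_struct}, stepping down by $k$ from a known forbidden position through the gap until it lands at the target local position of the next occurrence. Each such step demands either an already-forbidden neighbour or a monochromatic factor $w[j-k+1,\ldots,j-1]\in\{a^{k-1},b^{k-1}\}$; hence every gap must be filled with suitably placed runs $a^{k-1}$ or $b^{k-1}$ that license these steps, while simultaneously (i) keeping the gap itself out of any completing factorization, and (ii) not creating spurious occurrences of $u$ or $v$. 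Verifying this letter-by-letter consistency of the binary word, and checking that the runs of length $k-1$ really do drag every intended position down to its successor, is the delicate core of the proof; it is precisely the requirement of length-$(k-1)$ runs inside each of the $\Theta(k)$ gaps that forces the gaps to have size $\Theta(k)$ and makes the total length quadratic.

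Finally, once $w$ is fixed, the length claim is routine bookkeeping: summing the lengths of the $\Theta(k)$ occurrences (each contributing $k$, with the overlaps subtracted) and of the gaps (each of prescribed size $\Theta(k)$), together with any fixed prefix and suffix needed to align the first window at positions near the start, should telescope to exactly $5k^2-17k+13$. Non-completeness of $S_k$ then follows immediately, since the window argument above yields forbidden positions $0,1,\ldots,k-1$ and hence $w\in\Sigma^*\setminus\Fact(S_k^*)$.
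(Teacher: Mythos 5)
Your overall skeleton matches the paper's: build a chain of occurrences of $u$ and $v$, seed the forbidden set $\{0\}$ in a final occurrence of $v$ via Lemma~\ref{FP_zero}, and trace forbidden positions right-to-left with Lemmas~\ref{FP_struct}--\ref{FP_transform_overlap}; your preliminary sliding-window reduction is correct (though unnecessary, since one can simply place the occurrence with all $k$ positions forbidden at the very beginning of the word, as the paper does with its prefix $u$). However, the growth mechanism at the heart of your plan is wrong. You assert that Lemma~\ref{FP_zero} ``re-supplies position $0$ at every occurrence'': it does so only for occurrences of $v$; in an occurrence of $u$, position $0$ is forbidden \emph{iff} position $k-1$ is. This breaks your intended pattern $\{0\}\subseteq\{0,1\}\subseteq\{0,1,2\}\subseteq\cdots$ with one new residue per occurrence: if $F_q=\{0,1,\ldots,m\}$ with $m<k-1$ and the gap realizes a shift of $+1$, then by Lemma~\ref{FP_tranform} $F_p\subseteq\{0,1,\ldots,m+1\}$, so $k-1\notin F_p$; if $p=u$ this forces $0\notin F_p$ and hence $|F_p|\le|F_q|$, and if $p=v$ the detailed tracing (this is exactly what cases~3 and~4 of Theorem~\ref{increase} and Lemma~\ref{vv-} formalize) shows the extra position cannot actually become forbidden either. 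Indeed, the paper's matching lower bound in Section~\ref{lower} proves that raising the count from $\ell$ to $\ell+1$ forbidden positions, for $3\le\ell\le k-2$, costs at least $4k-4$ letters, so an increase cannot happen at every occurrence; this is precisely why the answer is $\approx 5k^2$ and not the $\approx 2k^2$ your pattern would produce. The sets that actually occur are wrap-around blocks such as $\{0,k-i-2,\ldots,k-2\}$, obtained in the paper from the word $\omega=u\dt\prod_{i=1}^{k-3}(ra^i\dt b^{k-2-i}r\dt)v$ with $r=b^{k-1}a^{k-1}$, which gains one forbidden position per period of length $5k-4$.

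Beyond this, the proposal is a plan rather than a proof: you never exhibit the word, and you explicitly defer both the ``delicate core'' (showing that each claimed position is \emph{actually} forbidden, which the containment lemmas alone do not give) and the length computation (``should telescope to exactly $5k^2-17k+13$''). Since the exact coefficient $5$ in $5k^2-17k+13$ is a consequence of the specific period structure --- two overlapping blocks $r$ plus gaps $a^i\dt$ and $b^{k-2-i}$ per gained position --- no amount of bookkeeping on your proposed word could recover it. To repair the argument you need an explicit construction of the paper's type and an induction over the traced occurrences establishing \emph{equality} of the forbidden sets (e.g.\ $F_v^{2i}=\{0,k-i-2,\ldots,k-2\}$), using Lemma~\ref{FP_struct} to certify each new position, not merely the upper bounds of Lemmas~\ref{FP_tranform} and~\ref{FP_transform_overlap}.
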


\begin{proof} For clarity by $r$ we denote overlapping occurrences of $v$ and $u$, i.e. $r=b^{k-1}a^{k-1}$. Consider the word $\omega=u\dt\prod\limits_{i=1}^{k-3}(ra^i\dt b^{k-2-i}r\dt)v$, where $\dt$ can be replaced by any letter of the alphabet $\Sigma$. Let us enumerate occurrences of $v$ counting backwards from the last one (the first occurrence of $v$ in this order will have number $0$), and let $F_{v}^i \subseteq \{0, \ldots, k - 1\}$ be the set of forbidden positions in the $i$th occurrence of $v$. Occurrences of $u$ are counted in the same way (but starting from $1$ instead of $0$) and $F_u^i$ are defined analogously. Note that, $F_v^0 = \{0\}$. Then using previous lemmas it is easy to see that $F_u^1=\{1\}$ and $F_v^1=\{0,k-1\}$. By lemma~\ref{FP_tranform} we have $F_u^2\subseteq\{0,k-2,k-1\}$ and by lemmas~\ref{FP_struct} and \ref{FP_zero} we get $F_u^2=\{0,k-2,k-1\}$. Analogously we obtain $F_v^2=\{0,k-3,k-2\}$.
Now assume $F_v^{2i}=\{0,k-i-2,\ldots,k-2\}$, $1\le i\le k-4$, and let us show that $F_v^{2(i+1)}=\{0,k-i-3,\ldots,k-2\}$. Applying step by step lemmas~\ref{FP_struct}-\ref{FP_transform_overlap} we obtain the following sets of forbidden positions: $$\begin{array}{l}
                        F_u^{2i+1}=\{0,1,k-i,\ldots,k-1\},\\
                        F_v^{2i+1}=\{0,k-i-1,\ldots,k-1\}, \\
                        F_u^{2i+2}=\{0,k-i-2,\ldots,k-1\},\\
                        F_v^{2i+2}=\{0,k-i-3,\ldots,k-2\}.
                       \end{array}
$$
Thus, for $i=k-3$ we have $F_v^{2(k-3)}=\{0,1,\ldots,k-2\}$. Hence the set of forbidden positions in the last occurrence of $u$ is $F_u^{2(k-3)+1}=\{0,1,\ldots,k-1\}$, which means that the word $\omega$ is uncompletable. Its length equals $5k^2-17k+13$.
\end{proof}

\section{Lower bound for $uwl(S_k)$}
\label{lower}

First we prove some nice properties of a minimal uncompletable word in $S_k$.

\begin{theorem}
\label{minimal_border}
Consider a minimal uncompletable word $w$. Then $u$ is a prefix of $w$ and $v$ is its suffix.
\end{theorem}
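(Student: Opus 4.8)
The plan is to prove the statement for the suffix, namely that $v=b^{k-1}a$ ends every minimal uncompletable word $w$, and then obtain the prefix claim for free. The key observation is that $S_k$ is invariant under the length-preserving anti-morphism $\phi\colon\Sigma^*\to\Sigma^*$ that reverses a word and simultaneously exchanges the two letters: one checks $\phi(ba^{k-1})=b^{k-1}a$, $\phi(b^{k-1}a)=ba^{k-1}$, and $\phi(\{a^{k-1},b^{k-1}\})=\{a^{k-1},b^{k-1}\}$, so $\phi(S_k)=S_k$ and hence $\phi(S_k^*)=S_k^*$. Since $\phi$ preserves both factorhood and length, it maps minimal uncompletable words to minimal uncompletable words and exchanges $u$ and $v$, turning ``$u$ is a prefix'' into ``$v$ is a suffix''. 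Thus it suffices to show that $v$ is a suffix of $w$.

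First I would record the working characterization that $w\in\Fact(S_k^*)$ iff $pw\in\Pref(S_k^*)$ for some $p$ with $|p|\le k-1$ (a shortest left context must lie inside a single block, which has length at most $k$). In particular, $w$ uncompletable forces $w\notin\Pref(S_k^*)$, i.e. position $0$ is forbidden. Next I would invoke minimality: the word $w'=w[1,\ldots,|w|-1]$ is completable, so there is $q$ with $|q|\le k-1$ and $z:=qw'\in\Pref(S_k^*)$, whereas $qw=z\,w[|w|]\notin\Pref(S_k^*)$ because $w$ is uncompletable. Hence appending the single letter $c=w[|w|]$ drives $z$ out of $\Pref(S_k^*)$, and the whole problem reduces to understanding which one-letter extensions can do this.

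The technical core is a \emph{dead-state} analysis of left-to-right parsing. Reading $z$ maintains the set of possible overhangs (the unfinished last block, a prefix of an $S_k$-word of length $\le k-1$); $z\in\Pref(S_k^*)$ exactly when this set is nonempty, and $zc\notin\Pref(S_k^*)$ exactly when every reachable overhang dies on reading $c$. The plan is to show that the \emph{only} dying transition is from the overhang $b^{k-1}$ on the letter $a$: every word of length $\le k-1$ is a prefix of some word of $S_k$ (using $k\ge4$ and that only $a^{k-1},b^{k-1},u,v$ are excluded), so short overhangs never die; a full-length overhang lying in $S_k$ can always be closed; and the two forced-extension overhangs complete harmlessly, since $a^{k-1}$ extends to $a^{k}\in S_k$ or $a^{k-1}b\in S_k$ and $b^{k-1}$ extends to $b^{k}\in S_k$, leaving only $b^{k-1}\xrightarrow{a}b^{k-1}a=v\notin S_k$ as a genuine dead end. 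Consequently $c=a$ and the unique reachable overhang of $z$ must be $b^{k-1}$; as $|w|>k$ this overhang lies entirely inside $w'$, whence $w[|w|-k+1,\ldots,|w|-1]=b^{k-1}$ and $w[|w|]=a$, that is, $v$ is a suffix of $w$.

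The main obstacle is exactly this death-state analysis: one must verify carefully that no shorter overhang and no alternative parse can avoid the collapse, so that falling out of $\Pref(S_k^*)$ by one letter is possible \emph{only} through the $b^{k-1}\xrightarrow{a}$ transition. The remaining points are routine: the bound $|w|>k$ (short words, and even $u$ and $v$ themselves, are easily seen to be completable) guarantees that the overhang sits inside $w'$, and the symmetry $\phi$ then promotes the suffix conclusion to the prefix claim, completing the proof.
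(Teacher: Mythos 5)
Your proposal is correct and follows essentially the same route as the paper's proof: both drop the last letter of $w$, use minimality to obtain a left context $q$ of length at most $k-1$ with $qw'\in\Pref(S_k^*)$, and then show that the only way a single appended letter can force the word out of $\Pref(S_k^*)$ is through the overhang $b^{k-1}$ read against $a$ --- your ``dead-transition'' analysis is exactly the paper's case analysis on the overhang length (short overhangs always extend, length-$(k-1)$ overhangs in $S_k$ can be closed, and $a^{k-1}$, $b^{k-1}$, including the $ba^{k-2}$ case behind the paper's $rwa^{k-1}$ trick, survive except for $b^{k-1}a=v$). The transfer from the suffix claim to the prefix claim via the reversal-plus-letter-exchange symmetry $\varphi(\overleftarrow{\,\cdot\,})$ is likewise identical to the paper's argument.
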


\begin{proof}
The word $w$ has either $u$ or $v$ as a factor, otherwise $w \in \Pref((S_k \cap \Sigma^{k})^*)$.
Let $w = w'x$, where $|x| = k$, and suppose $x \ne v$. Let $x = x'z$, where $z \in \Sigma$. Since $w$ is minimal, we conclude that $w'x' \in \Fact(S_k^*)$, which means $rw'x' = qy$, for some $q \in S_k^*$ and $|y| \leq k - 1$. If $|y| < k - 1$, then $yz \in \Pref(S_k)$, because all the words of length at most $k - 1$ are prefixes of some words in $S_k$. If $|y| = k - 1$, then $yz = x$. If $x \ne u$, then $yz \in S_k$, and $qyz = rw \in S_k^*$.  If $x = u$, then $y = ba^{k - 2}$, $z = a$ and we have for instance $rwa^{k - 1} \in S_k^*$. In any case we get a contradiction with the fact that $w$ is uncompletable. Thus, $w$ has $v$ as a suffix.

Now we are going to investigate one particular symmetry property of uncompletable words for $S_k$.
It is trivial that the mirror image $\overleftarrow{S}$ of a non-complete set $S$ is again non-complete. Moreover,  mirror images $\overleftarrow{w}$ of uncompletable words $w$ for $S$ are uncompletable for $\overleftarrow{S}$. The same property holds true for renaming morphism: $\varphi(a) = b$ and $\varphi(b) = a$. Applying these statements to our set we get $T_k = \overleftarrow{S_k} = \varphi(S_k)$, where
$$T_k=\left(\Sigma^k\setminus\{a^{k-1}b,ab^{k-1}\}\right)\cup\left(\Sigma^{k-1}\setminus\{a^{k-1},b^{k-1}\}\right).$$ So, if $w$ is an uncompletable word for $S_k$, then $\varphi(\overleftarrow{w})$ is also uncompletable for $S_k$. As we have already shown, every minimal uncompletable word has $v$ as a suffix. From the symmetry property it follows that every such word has $u$ as a prefix.
\end{proof}

The suffix $v$ of a minimal uncompletable word $w$ has only one forbidden position, namely $0$, and in the prefix $u$ of $w$ all the positions from $0$ to $k - 1$ are forbidden.  Thus, we have to analyze how forbidden positions change from one occurrence of a word from $\Sigma^k \setminus S_k$ to the next one.

Consider an arbitrary occurrence of a word from $\Sigma^k \setminus S_k$ in $w$. Let $F$ be the set of its forbidden positions. We will make use of the following representation of $F$: $$F = [f_{1,1}, f_{1, 2}, \ldots, f_{1, m_1}; f_{2,1}, \ldots, f_{2,m_2}; \ldots; f_{n, 1}, \ldots, f_{n, m_n}],$$ where $f_{i, j + 1} \equiv f_{i,j} + 1 \mod k$, $f_{i + 1, 1} > f_{i, m_i}$ and $n\ge 1$. Simply speaking, we partition the set $F$ into blocks of consecutive (with respect to cyclic order) forbidden positions.

\begin{example}
Consider the word $ba^{10}$, and let $F = \{0,1,2,5,6,8,10\}$ be the set of its forbidden positions with respect to $S_{11}$. Then according to our  representation $F = [5,6; 8; 10,0,1,2]$.
\end{example}

\begin{theorem}
\label{increase}
Let $p$ and $q$ be two consecutive occurrences of words from $\{u,v\}$ in a minimal uncompletable word $w$. Let $F_p$ and $F_q$ be the sets of forbidden positions in $p$ and $q$ respectively. If $|F_p| > |F_q|$, then one of the following holds true:
\begin{enumerate}
\item[(i)] $p = q = u$, $F_p = \{0, k - j, \ldots, k - 1\}$ and $F_q = \{1, \ldots, j\}$, where $1 \leq j \leq k - 2$;
\item[(ii)] $p = v$, $q = u$, these occurrences overlap, $F_p=\{0,k-1\}$ and $F_q = \{1\}$;
\item[(iii)] $p = u$, $q = v$, $F_p = \{0, j - i - 1, \ldots, k - 1\}$, $F_q = \{0, \ldots, i, j, \ldots, k - 1\}$, where $j \not\equiv i + 1\mod k$ and there are $k - 1 - i \mod k$ letters between these occurrences;
\end{enumerate}
\end{theorem}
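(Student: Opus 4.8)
The plan is to reduce the statement to a sharp dichotomy and then run a short case analysis over $(p,q)\in\{u,v\}^2$ that is controlled by the places where a factor $a^{k-1}$ or $b^{k-1}$ can occur. By the observation preceding the theorem, $|F_p|\le|F_q|+1$ and the only position that can be gained is $0$; so $|F_p|>|F_q|$ forces $|F_p|=|F_q|+1$. Writing $\sigma$ for the rigid shift of Lemma~\ref{FP_tranform} (or of Lemma~\ref{FP_transform_overlap} in the overlapping case), which is a bijection of $\{0,\dots,k-1\}$, this equality holds exactly when three conditions hold simultaneously: position $0$ is forbidden in $p$; $0\notin\sigma(F_q)$; and every inherited position survives, i.e.\ $\sigma(F_q)\setminus\{0\}\subseteq F_p$. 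The first two conditions say that $0$ is a genuinely new forbidden position, the third that nothing is lost between $q$ and $p$.

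First I would read Lemma~\ref{FP_struct} as a recurrence relating a position to the one $k$ places later. Inside each maximal block of cyclically consecutive forbidden positions of $F_q$, it transports every position by $\sigma$ into $F_p$ except possibly the front of the block; the front survives exactly when the $k-1$ letters of $w$ immediately preceding the corresponding position form $a^{k-1}$ or $b^{k-1}$. Thus $\sigma(F_q)\setminus\{0\}\subseteq F_p$ is equivalent to demanding that the front of every block of $F_q$ be shielded by such a monochromatic window, while Lemma~\ref{FP_zero} supplies position $0$ for free when $p=v$, and when $p=u$ exactly when $k-1\in F_p$.

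Next I would feed in the hypothesis that $p$ and $q$ are consecutive, so no further $u$ or $v$ lies between them: the only available factors $a^{k-1}$, $b^{k-1}$ are the $a$-run of $u$ (and of the overlap word $r=b^{k-1}a^{k-1}$) and the $b$-run of $v$ (and of $r$), and an all-$a$ or all-$b$ gap of length $\ge k-2$ would manufacture a spurious $u$ or $v$. For $p=q=u$ the one usable shield is an $a^{k-1}$ sitting just before $q$, which forces $F_q$ to be a single block $\{1,\dots,j\}$ and $|x|\equiv k-1-j\pmod{k}$; the block then lands on $\{k-j,\dots,k-1\}$, position $k-1$ creates $0$ through Lemma~\ref{FP_zero}, and $0\notin\{k-j,\dots,k-1\}$ gives (i). In the overlap of $v$ and $u$ the shift is $j\mapsto j-2$ and the shield $b^{k-1}$ is exactly the $b$-run of $r$, leaving only $F_u=\{1\}$ and hence $F_v=\{0,k-1\}$, which is (ii). For $p=u$, $q=v$ the position $0$ is present in $F_q$ by Lemma~\ref{FP_zero}, and the single block front of $F_q$ is shielded by the $a$-run of $u$ when $k-1\notin F_q$ and by the $b$-run of $v$ when $k-1\in F_q$; tracking the shift by $\ell\equiv k-1-i\pmod{k}$ then produces $F_p=\{0,j-i-1,\dots,k-1\}$, which is (iii).

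The step I expect to be the main obstacle is precisely this shielding analysis together with the exclusion of the configurations not appearing in the list. One must check, in each surviving case, that exactly one block front occurs, that a gap realizing its shield is compatible with consecutiveness, and that no second front is left unprotected; this is delicate cyclic index arithmetic around $\sigma$. The genuinely non-trivial exclusions are $p=v$ without overlap and $p=q=v$: there the relevant block meets the boundary through the final letter $a$ of $v$ rather than through a long run, so its front could be shielded only by an all-$a$ or all-$b$ gap of length $\ge k-2$, which destroys consecutiveness; hence in every such configuration either $0\in\sigma(F_q)$ or some front is unshielded, and therefore $|F_p|\le|F_q|$. Carrying out this bookkeeping cleanly, rather than drowning in the arithmetic of the shifts, is where the real care is needed.
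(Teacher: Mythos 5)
Your skeleton is the same as the paper's: reduce the increase $|F_p|>|F_q|$ to the three conditions ($0\in F_p$, $0\notin\sigma(F_q)$, $\sigma(F_q)\subseteq F_p$), then run a case analysis over $(p,q)$ in which lemma~\ref{FP_struct} converts survival of block fronts into monochromatic-factor conditions that consecutiveness is supposed to forbid. But your central dichotomy is false in both of its nontrivial claims, and this breaks the proof exactly where the real work is. A front does not survive ``exactly when'' the $k-1$ letters immediately preceding its position in $q$ form $a^{k-1}$ or $b^{k-1}$: that window is only the \emph{last} in a chain of applications of lemma~\ref{FP_struct}, one per $k$ letters separating $p$ from $q$, and survival requires a monochromatic window at every step of the chain; likewise non-front positions are not transported automatically, since their chains lean on the front's chain one step behind. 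The necessity direction (no last window $\Rightarrow$ front lost) is correct and is enough for some exclusions, but not for all of them.

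Two concrete failures. First, for $p=u$, $q=v$ you must show that $F_q$ is a single cyclic block (this is what produces the form in (iii)), and your criterion cannot do it: if $F_q$ has several fronts, say $3$ and $6$, then a long run of $b$'s at the end of $x$ gives \emph{every} front a monochromatic last window, and, contrary to your blanket claim, such a run manufactures no spurious occurrence of $u$ or $v$ at all --- the word $b^{m}b^{k-1}a$ contains exactly one occurrence of $v$, namely $q$ itself. The paper excludes several blocks by iterating the window condition along the entire chain: the windows of two distinct fronts interleave and force the last $\ell k-f$ letters of $x$ to be $b$'s for every $\ell\ge1$ (where $f$ is a front), which is impossible for a finite $x$. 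Nothing in your plan plays this role. Second, the exclusion of $p=q=v$ fails for the same reason: a $b$-run before $q=v$ is perfectly compatible with consecutiveness, so ``some front is unshielded'' is simply not true there. The paper instead argues on the $p$ side, a tool your framework lacks entirely: a surviving position $t\notin\{0,k-1\}$ of $p=v$ would need the $k-1$ letters \emph{following} it to be monochromatic, but inside $v$ they begin with $b^{k-1-t}a$; and $t=k-1$ forces $x$ to begin with $a^{k-2}$, creating a spurious $u$ straddling the end of $p$. This is also why you misplace the difficulty: the $p=v$ cases are the short ones in the paper, while the genuinely delicate step is the multi-block analysis for $q=v$, which your single-window criterion cannot reach.
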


\begin{proof}
Let $F_q = [f_{1,1},  \ldots, f_{1, m_1}; \ldots; f_{n, 1}, \ldots, f_{n, m_n}]$. First assume that $p$ and $q$ do not overlap, so let $pxq$ be the corresponding factor of $w$.

\emph{Case 1.}
Let $p = q = u$. Consider the case $n \geq 2$. Then there exists $1 \leq i \leq n$ such that $f_{i, 1} \geq 2$. If $f_{i,1} > 2$, then by lemma~\ref{FP_struct} position $f_{i,1} + |x| \notin F_p$. If $f_{i,1} = 2$, then by the same lemma $f_{i,1} + |x| \in F_p$ implies $x = x'b^{k - 2}$. But then $v$ is a factor of $pxq$ which contradicts the fact that $p$ and $q$ are two consecutive occurrences from $\{u,v\}$. Thus, $f_{i,1} + |x|$ is not forbidden in $p$. From lemma~\ref{FP_tranform} it follows that $F_p \subseteq\{j+|x|\mod k\mid j\in F_q\}\cup\{0\}$, whence $|F_p| \leq |F_q|$, a contradiction. Consequently $n = 1$ and $F_q = [f_{1,1}, \ldots, f_{1, m}]$. If $f_{1,1} \geq 2$, then following the same argument as above, we conclude that $|F_p| \leq |F_q|$, hence $f_{1,1} \in \{0,1\}$. Note that, if $f_{1,1} = 0$, by lemma~\ref{FP_zero} we have $k - 1 \in F_q$. It means that $F_q = \{0,1, \ldots, k - 1\}$ which contradicts minimality of $w$. Thus $f_{1,1} = 1$. If $f_{1,m} = k - 1$, then by lemma~\ref{FP_zero} we obtain that $0 \in F_q$ and in this case $f_{1,1}$ cannot be equal to $1$. So $f_{1,m} \leq k - 2$.
Now it remains to prove that $F_p$ has form as stated in $(i)$. 
Since $|F_p| > |F_q|$, we have $0 \in F_p$. Then by lemma~\ref{FP_zero} we obtain that $k - 1$ is also in $F_p$. So there exists a position $i \in F_q$ satisfying $i + |x| \equiv k - 1 \mod k$. If $i < f_{1,m}$, then $i + 1 + |x| \equiv 0 \mod k$, hence $|F_p| \leq |F_q|$. Thus $i = f_{1,m}$, $|x| \equiv k - f_{1,m} - 1 \mod k$ and by lemma~\ref{FP_tranform} we deduce $F_p = \{0, k - f_{1,m}, \ldots, k - 1\}$.

\emph{Case 2.}
Let $p = v$, $q = u$. We are to show that either $f_{1,1} + |x|$ is not forbidden in $p$ or $f_{1,1} + |x| \equiv 0 \mod k$, so in both cases $|F_p| \leq |F_q|$. It holds for $f_{1,1} \geq 2$ by the argument as in the previous case. Note that, $f_{1,1} \ne 0$, otherwise by lemma~\ref{FP_zero} position $k - 1 \in F_q$ which contradicts our representation of a set of forbidden positions. Therefore, assume $f_{1,1} = 1$. Suppose $1 + |x| \equiv i \mod k$. If $0 < i < k - 1$, then by lemma~\ref{FP_struct} position $i$ is not forbidden in $p$. If $i = 0$, then lemma~\ref{FP_tranform} implies $|F_p| \leq |F_q|$. If $i = k - 1$, then by lemma~\ref{FP_struct} we have $i \in F_p$ only if $a^{k - 2}$ is a prefix of $x$, but then $p$ and $q$ are not consecutive occurrences from $\{u,v\}$.

\emph{Case 3.}
Let $p = u$, $q = v$. Suppose $n\ge 2$, and consider arbitrary positions $f_{i, 1} < f_{j, 1}\in F_q$. We show that either $f_{i, 1}+|x|\mod k\notin F_p$ or $f_{j,1}+|x|\mod k\notin F_p$. Arguing by contradiction, suppose both positions are forbidden in $p$. Then by lemma~\ref{FP_struct} the word $x$ must have suffix $b^{k-f_{j, 1}}$ (if $|x|< k-f_{j,1}$, then $f_{j,1}+|x|\mod k\notin F_p$ by lemma~\ref{FP_struct}). Analogously $b^{k-f_{i,1}}$ must be the suffix of $x$, and if $|x|< k-f_{i,1}$, then $f_{i,1}+|x|\mod k \notin F_p$ by lemma~\ref{FP_struct}. But then by the same lemma $k+(k-f_{j,1})$ last letters of $x$ are $b$'s. Continuing this argument we get that $\ell k - f_{j,1}$ last letters of $x$ are $b$'s for any positive integer $\ell$. It means that there exists no finite word $x$ such that both positions $f_{i,1}+|x|\mod k$ and $f_{j,1}+|x|\mod k$ are forbidden in $p$. Hence $n=1$, and $F_q = [f_{1,1}, \ldots, f_{1,m}]$. Since $|F_p| > |F_q|$, by lemmas~\ref{FP_tranform} and~\ref{FP_zero} there is no $j$ such that $f_{1,j} + |x| \equiv 0 \mod k$. Besides $f_{1, i} + |x| \equiv k - 1 \mod k$ for some $i$. If $i < m$, then $f_{1, i + 1} + |x| \equiv 0 \mod k$, which is impossible. Thus, $i = m$. Moreover, $|x| \equiv k - 1 - f_{1,m} \mod k$. Since $0$ is always forbidden in $v$, we can represent $F_q$ as $\{0, \ldots i, j, \ldots, k - 1\}$ for some $0 \leq i < j \leq k - 1$. Then by lemma~\ref{FP_tranform} we have $F_p = \{0, j - i - 1, \ldots, k - 1\}$.

\emph{Case 4.}
Let $p = v$, $q = v$. If $f_{1,1} + |x| \not\in F_p$ then $|F_p| \leq |F_q|$. Otherwise by lemma~\ref{FP_struct} either $f_{1,1} + |x| \equiv 0 \mod k$ or $f_{1,1} + |x| \equiv k - 1 \mod k$. In the first case by lemma~\ref{FP_tranform} we get $|F_p| \leq |F_q|$. In the latter case by lemma~\ref{FP_struct} the word $a^{k - 2}$ have to be the prefix of $x$, which contradicts the fact that $p$ and $q$ are consecutive occurrences.

Now assume that $p = v$, $q = u$, and they overlap. If there exists $i$ such that $f_{i,1} > 2$ then by lemmas~\ref{FP_struct} and \ref{FP_transform_overlap} position $f_{i,1} - 2 \not\in F_p$, therefore $|F_p| \leq |F_q|$. Note that, $f_{1,1} \ne 0$, otherwise by lemma~\ref{FP_zero} position $k - 1 \in F_q$ which contradicts our representation of a set of forbidden positions. Hence, for all $i$ we have $f_{i, 1} \in \{1,2\}$. It immediately implies that $n = 1$. If $f_{1,j} = 2$ for some $j$ then by lemma~\ref{FP_transform_overlap} we get that $f_{1,j} - 2$ is equal to $0$ and $|F_p| \leq |F_q|$. Thus $F_q = \{1\}$ and $F_p = \{0, k - 1\}$.
\end{proof}

\begin{lemma}
\label{uu=}
Let $p$ and $q$ be two consecutive occurrences of $u$ in a word $w$, $F_p$ and $F_q$ be the corresponding sets of forbidden positions. If $F_q = \{0,i,\ldots, k - 1\}$ and $|F_p| = |F_q|$, then $F_p = F_q$.
\end{lemma}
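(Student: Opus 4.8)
The plan is to imitate the argument of Theorem~\ref{increase}, Case~1 (the $p=q=u$ analysis), but to run the counting at equality instead of at a strict increase. Since the only overlapping pair from $\{u,v\}$ is $v,u$, two occurrences of $u$ never overlap, so there is a factor $pxq$ of $w$ with $|x|\ge 0$, and Lemma~\ref{FP_tranform} gives $F_p\subseteq\{\,j+|x|\bmod k\mid j\in F_q\,\}\cup\{0\}$. I write $\phi(j)=(j+|x|)\bmod k$, a cyclic rotation of $\{0,\ldots,k-1\}$. First I dispose of the degenerate case $i=1$, i.e. $F_q=\{0,1,\ldots,k-1\}$: here $\phi(F_q)\cup\{0\}=\{0,\ldots,k-1\}$, so $|F_p|=k$ already forces $F_p=F_q$. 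Hence I may assume $i\ge 2$, in which case $F_q$ is the single cyclic block $[\,i,i+1,\ldots,k-1,0\,]$ whose block start is $i\ge 2$.

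The key step is to show that this block start is \emph{lost}, namely $\phi(i)\notin F_p$. Here I would argue exactly as in Theorem~\ref{increase}: because $i-1\notin F_q$, Lemma~\ref{FP_struct} applied at the forbidden local position $i$ of $q$ forces the position $k$ lower to be unforbidden, since the relevant length-$(k-1)$ window contains the leading $b$ of $q$ (so it is not $a^{k-1}$) and, by the same consecutiveness argument, cannot equal $b^{k-1}$ without creating an occurrence of $v$ strictly inside $pxq$. One then propagates ``unforbidden'' downward through $x$: each intervening length-$k$ window differs from $u$ and $v$, hence lies in $S_k$, so unforbiddenness is inherited step by step all the way to $p$, yielding $\phi(i)\notin F_p$. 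This propagation, together with the boundary care needed when $i=2$ and when $|x|\ge k$, is the main obstacle; the rest is bookkeeping.

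Granting the loss, I combine it with the hypothesis $|F_p|=|F_q|$. One checks $\phi(i)\ne 0$, since otherwise $F_p\subseteq\phi(F_q\setminus\{i\})$ would give $|F_p|\le k-i<|F_q|$. Thus $F_p\subseteq B'\cup\{0\}$, where $B'=\phi(F_q\setminus\{i\})$ is a single cyclic block of length $|F_q|-1=k-i$. As $|B'\cup\{0\}|\le k-i+1=|F_q|=|F_p|$, equality forces $F_p=B'\cup\{0\}$ with $0\notin B'$; in particular $0\in F_p$. Applying Lemma~\ref{FP_zero} to the occurrence $p=u$ then gives $k-1\in F_p$, hence $k-1\in B'$. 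But a cyclic block containing $k-1$ while omitting its successor $0$ must \emph{end} at $k-1$, and being of length $k-i$ it can only be $B'=\{i,i+1,\ldots,k-1\}$. Therefore $F_p=\{0\}\cup\{i,\ldots,k-1\}=F_q$, as claimed.
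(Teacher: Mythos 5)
Your route is the same as the paper's: the inclusion from Lemma~\ref{FP_tranform}, loss of the block start $i$, then a counting argument with Lemma~\ref{FP_zero} to pin down $F_p$; the cyclic-block bookkeeping at the end is fine and matches the paper's congruence chase (which pins $|x|\equiv k-1 \pmod k$). However, there is one genuinely circular step: your check that $\phi(i)\ne 0$. Your proof of the loss claim $\phi(i)\notin F_p$ propagates unforbiddenness backwards from position $i-1$ of $q$ through length-$k$ windows, using that ``each intervening length-$k$ window differs from $u$ and $v$, hence lies in $S_k$.'' That is true exactly when the chain terminates at a position $\phi(i)\ne 0$ of $p$; if $\phi(i)=0$, the \emph{last} window of the chain is $p=u$ itself, which is not in $S_k$, so the propagation never reaches position $0$. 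Position $0$ of an occurrence of $u$ is precisely the position that can become forbidden without being inherited from $q$ (Lemma~\ref{FP_zero}); this is why Lemma~\ref{FP_tranform} carries the extra ``$\cup\{0\}$''. So you invoke the loss claim in the one case ($\phi(i)=0$) where your argument for it fails, in order to rule that very case out.

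The case is easy to close, but it needs Lemma~\ref{FP_zero} rather than the propagation: if $\phi(i)=0$ and $0\in F_p$, then $k-1\in F_p$ by Lemma~\ref{FP_zero}; since $k-1\ne 0$, Lemma~\ref{FP_tranform} gives $k-1=\phi(j)$ for some $j\in F_q$, and injectivity of $\phi$ together with $\phi(i)=0$ forces $j\equiv i-1\pmod k$ --- impossible, because $i\ge 2$ implies $i-1\notin F_q$. Hence $0\notin F_p$, your inclusion $F_p\subseteq\phi(F_q\setminus\{i\})$ does hold, and the size count $|F_p|\le k-i<|F_q|$ excludes $\phi(i)=0$ as you intended; the rest of your proof then goes through unchanged. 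For what it is worth, the paper asserts the loss claim just as flatly (``by Lemma~\ref{FP_struct} position $i+|x| \bmod k\notin F_p$''), but its subsequent deductions happen to close this corner implicitly, whereas your write-up leans on the claim exactly where its justification breaks, so the patch above is needed.
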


\begin{proof}
If $i=1$, then the statement of lemma obviously holds true. So we may assume $i>1$. Let $pxq$ be the factor of $w$. Then by lemma~\ref{FP_tranform} we have $F_p\subseteq\{0,|x|\mod k, i+|x|\mod k, i+1+|x|\mod k,\ldots,k-1+|x|\mod k\}$. By lemma~\ref{FP_struct} position $i+|x|\mod k\notin F_p$. Thus in order to have $|F_p|=|F_q|$, it is necessary that $0\in F_p$ and $j+|x|\not\equiv 0\mod k$ for all $i+1\le j\le k$. Lemma~\ref{FP_zero} implies $k-1\in F_p$. It means that $j+|x|\equiv k-1\mod k$ for some $i+1\le j\le k$. It is straightforward that $j=k$, $|x|\equiv k-1\mod k$ and $F_p=\{0,i,i+1,\ldots,k-1\}=F_q$.
\end{proof}

\begin{lemma}
\label{vu=}
Let $p$ and $q$ be consecutive occurrences of $v$ and $u$ respectively in a minimal uncompletable word $w$. Let $F_p$ and $F_q$ be the corresponding sets of forbidden positions. If $F_q = \{0,i,\ldots, k - 1\}$ and $|F_p| = |F_q|$, then these occurrences overlap and $F_p = \{0, i - 1, \ldots, k - 2\}$.
\end{lemma}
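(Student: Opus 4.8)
The plan is to follow the template of Lemma~\ref{uu=}, but to exploit the asymmetry coming from $p=v$: by Lemma~\ref{FP_zero} position $0$ is \emph{automatically} forbidden in $p$, so the real content is to locate the remaining $k-i$ forbidden positions of $p$ and to decide the overlap status. First I would dispose of the degenerate value $i=1$: then $F_q=\{0,1,\dots,k-1\}$ is the full set, which would make $q$ itself (hence a proper suffix of $w$) uncompletable, contradicting minimality; so I may assume $i\ge 2$. I would then split according to whether $p$ and $q$ overlap.

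\emph{Overlap case (the confirming case).} Here I apply Lemma~\ref{FP_transform_overlap} to obtain $F_p\subseteq\{0\}\cup\{\,j-2\bmod k : j\in F_q\,\}=\{0,\,i-2,\,i-1,\dots,k-2\}$. The block of $F_q$ starts at $i$ and its predecessor $i-1$ is not forbidden, so the same computation used in the overlapping sub-case of Theorem~\ref{increase} (Lemmas~\ref{FP_struct} and~\ref{FP_transform_overlap}, with block start $>2$) gives $i-2\notin F_p$ when $i\ge 3$; for $i=2$ the position $i-2$ is just $0$ and is harmless. Hence $F_p\subseteq\{0,i-1,\dots,k-2\}$, a set of size exactly $k-i+1=|F_q|$, and the hypothesis $|F_p|=|F_q|$ forces equality, which is precisely the claimed form.

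\emph{Non-overlap case (to be excluded).} Writing the factor as $pxq$, the interior block positions $i+1,\dots,k-1$ of $F_q$ each have a forbidden predecessor, so by Lemma~\ref{FP_struct} they transfer to forbidden positions of $p$; together with the automatic $0$ this accounts for at most $k-i$ forbidden positions, because the image $(i+|x|)\bmod k$ of the block start is, by the argument of Case~2 of Theorem~\ref{increase}, either non-forbidden or equal to $0$. Thus $|F_p|=|F_q|=k-i+1$ can hold only if the image $z:=|x|\bmod k$ of the position $0\in F_q$ is also forbidden in $p$. The heart of the argument is to show that this is impossible. Tracing position $0$ of $q=u$ backward by Lemma~\ref{FP_struct}, the first step forces the $k-1$ letters $w[|x|+1,\dots,k+|x|-1]$ immediately preceding $q$ to be $a^{k-1}$ or $b^{k-1}$ (unless the predecessor is itself forbidden). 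A run $b^{k-1}$ produces the factor $b^{k-1}\cdot ba^{k-1}=b^{k}a^{k-1}$, hence an occurrence of $v$ lying strictly between $p$ and $q$, contradicting that they are consecutive; a run $a^{k-1}$ only pushes the situation one period back, and iterating this the backward chain must, before it can reach $p$, either create such a uniform run between the two occurrences or run into the suffix $b^{k-1}a$ of $p=v$, whose alternation forbids the required uniform block. Consequently $z\notin F_p$, so $|F_p|\le k-i<|F_q|$, contradicting $|F_p|=|F_q|$; the occurrences must overlap.

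The routine parts are the overlap computation and the cardinality squeeze, both of which reuse machinery already developed for Theorem~\ref{increase}. I expect the main obstacle to be the backward-propagation bookkeeping in the non-overlap case: one must control, over possibly many periods, exactly where a uniform run $a^{k-1}$ or $b^{k-1}$ is allowed to occur, and rule out the alternative ``predecessor-forbidden'' escape in Lemma~\ref{FP_struct}, so as to guarantee that propagating the forbidden position $0$ of $u$ all the way back to $v$ is genuinely impossible without destroying consecutiveness.
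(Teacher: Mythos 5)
Your handling of the degenerate case $i=1$ and of the overlap case is correct and essentially the paper's: Lemma~\ref{FP_transform_overlap} plus the block-start argument from the overlap part of Theorem~\ref{increase} give $F_p\subseteq\{0,i-1,\ldots,k-2\}$, and the hypothesis $|F_p|=|F_q|$ forces equality. The genuine gap is in the non-overlap case, which is the heart of the lemma. You reduce it (correctly, via Lemma~\ref{FP_tranform} and the lost image of the block start $i$) to the claim that $z=|x|\bmod k$ cannot be forbidden in $p$, and you propose to prove this by tracing position $0$ of $q$ backwards with Lemma~\ref{FP_struct}, arguing that each backward step needs a uniform run $a^{k-1}$ or $b^{k-1}$, which would violate consecutiveness; you flag the ``predecessor-forbidden'' branch as a loose end to be ruled out. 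That branch cannot be ruled out: it actually fires. Take any $i$ with $2\le i\le k-2$ (such $i$ exist since $k\ge4$). Apply Lemma~\ref{FP_struct} to position $k-1$ of $q$: its predecessor, position $k-2$ of $q$, is forbidden because $k-2\in F_q$, so the position of $w$ immediately preceding the start of $q$ is forbidden --- no uniform run needed. Now apply Lemma~\ref{FP_struct} to position $0$ of $q$: its predecessor is the position just found, which is forbidden, so the position $k$ letters to the left of $q$'s start is forbidden as well. When $|x|=1$ that position is exactly position $z=1$ of $p$, and $pxq=b^{k-1}a\,c\,ba^{k-1}$ contains no occurrence of $u$ or $v$ other than $p$ and $q$. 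Since your propagation argument invokes only consecutiveness and Lemma~\ref{FP_struct}, this configuration refutes it: ``$z\notin F_p$'' is not merely unproven, it is false. The forbidden position $0$ of $q$ does propagate back into $p$ whenever $x$ is short, riding on the forbidden predecessors supplied by the block $\{i,\ldots,k-1\}$ rather than on uniform runs (for the same reason your side remark that the interior positions $i+1,\ldots,k-1$ ``transfer'' unconditionally is also wrong: the block erodes by one position per period).

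What actually kills the non-overlap case --- and what the paper proves --- is a counting dichotomy on $|x|$, not the loss of $z$. If $j+|x|\equiv 0\bmod k$ for some $i+1\le j\le k$ (with $j=k$ standing for position $0$), then $0$ coincides with one of the images, the image of $i$ is lost (it cannot also be $\equiv 0$), and $|F_p|\le k-i<|F_q|$ immediately. Otherwise one must have $|x|>k-(i+1)$, since $|x|\le k-(i+1)$ would make $j=k-|x|$ produce exactly such a collision; and in that long-$x$ regime the image of $i+1$ is lost too, because by Lemma~\ref{FP_struct} it could survive only by landing on $0$ (excluded by assumption) or on $k-1$ with $x=a^{k-2}x'$, which creates an occurrence of $u$ overlapping $p$ and destroys consecutiveness. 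Either way $|F_p|<|F_q|$, contradicting the hypothesis. Note that in the short-$x$ regime the count drops because of the wrap-around collision with $0$ while $z$ may perfectly well be forbidden --- precisely the phenomenon your backward-propagation plan cannot see, and which no amount of bookkeeping about uniform runs will recover.
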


\begin{proof}
If $i=1$, then trivially $w$ is not a minimal uncompletable word, so $i > 1$.
Suppose $p$ and $q$ do not overlap, so there exists $x$ such that $pxq$ is a factor of $w$. By lemma~\ref{FP_tranform} we have $F_p\subseteq\{0,|x|\mod k, i+|x|\mod k, i+1+|x|\mod k,\ldots,k-1+|x|\mod k\}$ and from lemma~\ref{FP_struct} it follows that either $i+|x|\notin F_p$ or $i+|x|\equiv 0\mod k$. Moreover, if $j+|x|\equiv 0\mod k$ for some $i+1\le j\le k$, then we immediately get $|F_p|<|F_q|$. So we may assume that $j+|x|\not\equiv 0\mod k$ for all $i+1\le j\le k$. First let $|x|>k-(i+1)$. Then by lemma~\ref{FP_struct} if position $i+1+|x|\mod k$ is forbidden in $p$, then either $i+1+|x|\equiv 0\mod k$ or $i+1+|x|\equiv k-1\mod k$. The first case contradicts our assumption. In the latter case $x = a^{k - 2}x'$, but this contradicts the fact that $p$ and $q$ are consecutive occurrences. Thus, both cases are impossible. So $0\le |x|\le k-(i+1)$, but then we have $i+1\le j=k-|x|\le k$. It means that $j\mod k$ is a forbidden position in $q$ and $j+|x|\equiv 0\mod k$, which again contradicts our assumption that $j+|x|\not\equiv 0\mod k$ for all $i+1\le j\le k$.
Therefore occurrences $p$ and $q$ overlap. By lemmas~\ref{FP_struct} and \ref{FP_transform_overlap} we get $F_{p}=\{0, i-1,\ldots, k-2\}$.
\end{proof}



\begin{lemma}
\label{vv-}
Let $p$ and $q$ be two consecutive occurrences of $v$ in a minimal uncompletable word $w$, and let $F_p$ and $F_q$ be the corresponding sets of forbidden positions. If $F_q = \{0,i,\ldots, k - 2\}$ then $|F_p| < |F_q|$.
\end{lemma}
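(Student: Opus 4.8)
The plan is to read $F_p$ off from $F_q$ through the transform lemmas and then exploit the fact that the block of $F_q$ stops at $k-2$, one short of $k-1$. First I would observe that two occurrences of $v$ never overlap (only $v$ followed by $u$ does), so $p$ and $q$ are separated by a factor $x$ with $|x|\ge 0$, and Lemma~\ref{FP_tranform} gives $F_p\subseteq C:=\{\,s+|x|\bmod k\mid s\in F_q\,\}\cup\{0\}$. Since $s\mapsto s+|x|\bmod k$ is a bijection of $\{0,\ldots,k-1\}$, the shifted image $A:=\{s+|x|\bmod k\mid s\in F_q\}$ has $|A|=|F_q|=k-i$, and by Lemma~\ref{FP_zero} we always have $0\in F_p$. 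So it suffices to discard enough candidates from $A$.

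The engine I would isolate is a \emph{block-bottom} principle: if $s\in F_q$ is the smallest element of a maximal cyclic block of $F_q$ (that is, its cyclic predecessor $s-1\notin F_q$) and $s+|x|\not\equiv 0\bmod k$, then $s+|x|\bmod k\notin F_p$. To prove it I would trace the corresponding global position backwards with Lemma~\ref{FP_struct} one $k$-step at a time. At the step leaving $q$ the only escape routes are that the predecessor $s-1$ be forbidden, which is excluded since $s$ is a block bottom, or that the intervening factor of length $k-1$ equal $a^{k-1}$ or $b^{k-1}$; the latter is impossible, because a monochromatic factor of length $k-1$ inside $pxq$ would force an occurrence of $u$ or $v$ strictly between $p$ and $q$, contradicting that the occurrences are consecutive (this is the same device as in Case~4 of Theorem~\ref{increase}, where $x=a^{k-2}x'$ is ruled out). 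Once the traced position fails to be forbidden, non-forbiddenness propagates the rest of the way back through the clean region, since there every length-$k$ factor lies in $S_k$; hence $s+|x|\bmod k\notin F_p$.

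For the generic shape $i\ge 2$ the set $F_q=\{0,i,\ldots,k-2\}$ has exactly two block bottoms, $0$ and $i$, whose images $|x|\bmod k$ and $i+|x|\bmod k$ are distinct, so at most one of them is $\equiv 0$. A short check of the three possibilities ($|x|\equiv0$, $i+|x|\equiv0$, or neither), using that whenever an image equals $0$ it is already supplied by Lemma~\ref{FP_zero} and so does not add to the count, shows in every case that $|F_p|\le(k-i)-1<|F_q|$, which is the desired strict inequality.

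The case I expect to be the main obstacle is $i=1$, where $F_q=\{0,1,\ldots,k-2\}$ is a single block with the lone bottom $0$, so the block-bottom principle removes only one candidate. When $|x|\bmod k\notin\{0,1\}$ the deleted image $|x|\bmod k$ is nonzero while $0$ reappears inside the shifted block, and the count still drops to $|F_p|\le k-2<|F_q|$. For the two residues $|x|\equiv 0,1\bmod k$ the counting argument only yields $|F_p|\le|F_q|$, and here I would invoke minimality of $w$: Case~4 of Theorem~\ref{increase} already gives $|F_p|\le|F_q|$, and if equality held one checks that necessarily $F_p=\{0,1,\ldots,k-2\}=F_q$, so $p$ and $q$ carry identical forbidden profiles; excising the factor of $w$ running from the start of $p$ up to the start of $q$ would then produce a strictly shorter word with the same first $k$ forbidden positions, contradicting minimality (this is the style of argument used in Theorem~\ref{minimal_border} and Lemma~\ref{uu=}). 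Thus $|F_p|<|F_q|$ in all cases. The two delicate points that I would write out carefully are the monochromatic-block exclusion feeding the block-bottom principle and the excision step that closes the single-block case $i=1$.
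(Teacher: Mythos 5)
Your skeleton --- candidate set from Lemma~\ref{FP_tranform}, discard the images of the block bottoms, then count --- is the same as the paper's, but the engine you build it on is unsound. The claim that a monochromatic factor of length $k-1$ inside $pxq$ forces an occurrence of $u$ or $v$ strictly between $p$ and $q$ is false when the window consists of $b$'s running into the prefix $b^{k-1}$ of $q=v$: such a run creates no new occurrence of $v$, because the only available $a$ is the last letter of $q$, and the occurrence of $b^{k-1}a$ ending there is $q$ itself. Concretely, for $k=4$ take $p=q=v=bbba$, $x=bbb$, $F_q=\{0,2\}$ (the lemma's hypothesis with $i=2$): the word $pxq=bbbabbbbbba$ contains no occurrence of $u$ or $v$ besides $p$ and $q$, yet the window at the step leaving $q$ for the block bottom $s=2$ is $bbb$, monochromatic; moreover the traced position one $k$-step below position $2$ of $q$ \emph{is} forbidden (its suffix begins $b^{5}a$; the length-three factor $bbb$ is not in $S_4$, and the length-four factor $bbbb\in S_4$ lands on the forbidden position $2$ of $q$), so your backward propagation of non-forbiddenness never gets started. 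The deeper problem is that your principle and its proof never use that $p=v$, and in that generality the statement is simply false: in case (iii) of Theorem~\ref{increase}, realized in the word of Theorem~\ref{non-complete}, the image of the block bottom of $F_q$ \emph{is} inherited into $F_p$ across exactly such a block of $b$'s (there $p=u$). The device you borrow does rule out $b$-runs sitting before $q=u$ (they create $v$, as in Case~1 of Theorem~\ref{increase}) and $a$-runs sitting after $p=v$ (they create $u$, as in Case~4), but not $b$-runs sitting before $q=v$. The paper's proof avoids this by anchoring Lemma~\ref{FP_struct} at the $p$ end: a position $s\ne 0$ of $p=v$ can become forbidden only through a predecessor chain grounded in $q$, or through a monochromatic window that starts with the final $a$ of $p$, which forces $s=k-1$ and $x=a^{k-2}x'$ and hence a $u$-occurrence overlapping $p$, contradicting consecutiveness. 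That asymmetry ($a$-runs after $v$ create $u$; $b$-runs before $v$ create nothing) is exactly what your symmetric claim misses, and it cannot be patched locally, since any correct proof must use which word $p$ is.

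There is a second gap in your closing case $i=1$, $|x|\equiv 1 \bmod k$: equality $|F_p|=|F_q|$ would \emph{not} force $F_p=F_q$. The surviving candidates there are $\{0,2,\ldots,k-1\}$ (the image of $k-2$ is $k-1$, and position $k-1$ of $p$ really can be forbidden in this configuration --- already for $|x|=1$ one checks it directly through Lemma~\ref{FP_struct}, using that $k-3$ and $k-2$ lie in $F_q$), and this set differs from $F_q=\{0,1,\ldots,k-2\}$, so the excision argument you invoke has nothing to act on. (The paper's own argument is also loosest exactly at $i=1$, and the lemma is only ever applied later with the lower end of the block at least $2$; but as written, your step does not close this case.)
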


\begin{proof}
Consider the factor $pxq$ of $w$. By lemma~\ref{FP_tranform} we get $F_{p}\subseteq\{0,|x|\mod k,i+|x|\mod k, i+1+|x|\mod k,\ldots, k-2+|x|\mod k\}$. We are going to show that $|x| \mod k$ is not forbidden in $p$. From lemma~\ref{FP_struct} it trivially follows that the position $|x| \mod k \in F_p$ if either $|x| \equiv 0 \mod k$ or $|x| \equiv k - 1 \mod k$. The first case contradicts minimality of $w$, for we would have $F_p=F_q$. In the latter case by the same lemma we conclude $x = a^{k - 2}x'$, which contradicts the fact that $p$ and $q$ are consecutive occurrences. Similar arguments can be applied to $i + |x| \mod k$. Namely, if $i + |x| \equiv k - 1 \mod k$, then again $x = a^{k - 2}x'$, a contradiction. So either $i + |x| \equiv 0 \mod k$ or $i+|x|\mod k\notin F_p$. In both cases we have $|F_p| < |F_q|$.
\end{proof}

\begin{lemma}
\label{uv_special}
Let  $p$ and $q$ be consecutive occurrences of $u$ and $v$ respectively in a word $w$. Let $F_p$ and $F_q$ be the corresponding sets of forbidden positions, and let $|F_p| = |F_q|$. If $F_q = \{0, i, \ldots, k - 2\}$ with $i > 1$ and $F_p=[f_{1,1},\ldots,f_{1,m}]$, 
then either $F_p = \{0,1, i + 2, i + 3, \ldots, k - 1\}$ or $F_p = \{0, i + 1, i + 2, \ldots, k - 1\}$.
\end{lemma}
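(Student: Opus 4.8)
The plan is to follow the template of the proofs of Lemmas~\ref{uu=}, \ref{vu=} and~\ref{vv-}: extract a superset of $F_p$ from Lemma~\ref{FP_tranform}, a forced subset from Lemma~\ref{FP_struct}, and then squeeze $F_p$ between them using the hypothesis that it is a single cyclic block. First note that every suffix of $u=ba^{k-1}$ is a run of $a$'s while every prefix of $v=b^{k-1}a$ is a run of $b$'s, so $u$ cannot overlap $v$ on its right and $uxv$ is a factor of $w$ for some $x$ with $|x|=t\ge 0$; put $\tau=t\bmod k$. Lemma~\ref{FP_tranform} then gives $F_p\subseteq C(\tau):=\{0\}\cup\{\tau\}\cup B$, where $B=\{(i+\tau)\bmod k,\dots,(k-2+\tau)\bmod k\}$ is the cyclic image of the block $\{i,\dots,k-2\}$ of $F_q$ and has $k-1-i$ elements. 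Since $|F_q|=k-i$ and $|C(\tau)|\le k-i+1$, the hypothesis $|F_p|=|F_q|$ leaves at most one candidate out of $F_p$. Finally, because $p=u$, Lemma~\ref{FP_zero} couples the extremes: $0\in F_p$ iff $k-1\in F_p$.

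Next I would isolate the \emph{forced core}. Writing $J(\tau)=\{(i+1+\tau)\bmod k,\dots,(k-2+\tau)\bmod k\}$ for the image of the interior $\{i+1,\dots,k-2\}$ of $F_q$ (those forbidden positions of $q$ whose predecessor is again forbidden), repeated application of Lemma~\ref{FP_struct} shows $J(\tau)\subseteq F_p$: each such position is recovered from its forbidden predecessor through the ``$j-1$ forbidden'' clause, so its transfer requires no all-equal window. Thus $J(\tau)\subseteq F_p\subseteq C(\tau)$, and $C(\tau)\setminus J(\tau)$ is contained in the three ``boundary'' candidates $\{0,\ \tau,\ (i+\tau)\bmod k\}$, namely the block-start image $(i+\tau)\bmod k$, the isolated image $\tau$ coming from $0\in F_q$, and $0$ itself.

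Now $F_p$ is a single cyclic block of length $k-i$ trapped between $J(\tau)$ (length $k-2-i$) and $C(\tau)$, so it must arise by adjoining to $J(\tau)$ exactly two positions \emph{contiguous} with it, each taken from $\{0,\tau,(i+\tau)\bmod k\}$. A short case analysis on which two flank $J(\tau)$ (both below, both above, or one on each side) shows contiguity forces $\tau\equiv 1$ or $\tau\equiv k+1-i\pmod k$. When $\tau=1$ we get $J(1)=\{i+2,\dots,k-1\}$ and $C(1)=\{i+1,\dots,k-1,0,1\}$, itself a single block of length $k-i+1$; dropping its end $1$ gives $F_p=\{0,i+1,\dots,k-1\}$ and dropping its end $i+1$ gives $F_p=\{0,1,i+2,\dots,k-1\}$, the two asserted forms, both containing $0$ and $k-1$ in accordance with Lemma~\ref{FP_zero}. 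The remaining alignment $\tau=k+1-i$ forces $F_p=\{0,1,\dots,k-1-i\}$, which contains $0$ but not $k-1$ and is therefore eliminated by the coupling of Lemma~\ref{FP_zero}. Hence $F_p$ is one of the two stated sets.

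The main obstacle is making the forced-core step rigorous for \emph{large} $x$. When $t\ge k$ the backward transfer of a forbidden block of $q$ proceeds through several $k$-steps, and at each level the bottom of the block loses its support, so a naive count yields fewer than $k-2-i$ survivors unless all-equal windows repeatedly intervene; such windows demand a run $a^{k-1}$ or $b^{k-1}$ inside $x$, which produces a hidden occurrence of $u$ or $v$ and contradicts that $p$ and $q$ are consecutive. I therefore expect the delicate point to be proving that $|F_p|=|F_q|$ already forces $t<k$, so that $J(\tau)$ transfers in a single level and $\tau=t$, together with the check that the two boundary windows governing the images $\tau$ and $(i+\tau)\bmod k$ cannot both be all-equal (this is precisely what selects between the two forms). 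The degenerate short-block case $i=k-2$ has empty interior and must be handled on its own; there counting and Lemma~\ref{FP_zero} alone leave only $F_p=\{0,k-1\}$, which is the second form.
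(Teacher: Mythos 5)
Your skeleton---the superset $C(\tau)$ from Lemma~\ref{FP_tranform}, the count showing $F_p$ omits at most one candidate, the coupling $0\in F_p\Leftrightarrow k-1\in F_p$ from Lemma~\ref{FP_zero}, and the final contiguity analysis---is sound and does land on exactly the two stated sets. The genuine gap is the ``forced core'' claim $J(\tau)\subseteq F_p$, and it is not a removable technicality: Lemma~\ref{FP_struct} transfers an interior forbidden position only one $k$-step back, so when $|x|\ge k$ the transferred block loses its bottom element at every level, exactly as you concede in your last paragraph. Worse, both repairs you propose fail. First, nothing in the hypotheses forces $|x|<k$ (the lemma is stated for an arbitrary word $w$), and no correct proof needs such a bound: the paper's own argument works entirely modulo $k$. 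Second, the ``hidden occurrence'' argument is false precisely in the boundary situations that actually occur: an all-equal window used to push a position of $q=b^{k-1}a$ across the gap is a window $b^{k-1}$ whose portion inside $x$ is a $b$-suffix of $x$, and such a run merges with the prefix $b^{k-1}$ of $q$ without creating any new occurrence of $v$ (in $b^{m}b^{k-1}a$ the unique letter $a$ pins down a unique occurrence of $v$, namely $q$ itself); symmetrically, a run of $a$'s at the start of $x$ merges with the suffix $a^{k-1}$ of $p=ba^{k-1}$ and creates no new occurrence of $u$. This is exactly why the paper does not argue via hidden occurrences: it invokes the Case~3 argument of Theorem~\ref{increase}, which shows that the images of the \emph{two} block starts $0$ and $i$ of $F_q$ cannot both lie in $F_p$, because iterating the window requirement would force ever longer $b$-suffixes of the finite word $x$---an argument valid for every $|x|$.

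The good news is that your own counting observation already contains a complete repair that bypasses Lemma~\ref{FP_struct} entirely, so the lemma does not need the forced core at all. Since $|F_p|=k-i$ and $F_p\subseteq C(\tau)$ with $|C(\tau)|\le k-i+1$, either two candidates coincide ($\tau\equiv 0$ or $0\in B$), in which case $F_p=C(\tau)$ splits into two blocks because $\tau$ is never cyclically adjacent to $B$ when $i>1$, contradicting the single-block hypothesis; or all candidates are distinct and $F_p=C(\tau)\setminus\{z\}$ for a single element $z$. If $z$ lies strictly inside the cyclic interval $B$, then $z$ is an isolated hole whose two neighbours lie in $F_p$, while the complement of $F_p$ has at least $i-1\ge 1$ further elements, so $F_p$ cannot be one block. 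Hence $z$ is one of $0$, $\tau$, or an endpoint of $B$, and in each case contiguity of $F_p$ together with Lemma~\ref{FP_zero} finishes the job: $z=0$ and $z\equiv k-2+\tau$ are impossible, $z=\tau$ forces $\tau\equiv 1$ and $F_p=\{0,i+1,\ldots,k-1\}$ (the alignment $\tau\equiv 1-i$ being killed by Lemma~\ref{FP_zero}, as you note), and $z\equiv i+\tau$ forces $\tau\equiv 1$ and $F_p=\{0,1,i+2,\ldots,k-1\}$; your degenerate case $i=k-2$ goes through as you describe. I recommend rewriting the middle of your proof in this form and deleting both the forced-core paragraph and the proposed $|x|<k$ reduction.
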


\begin{proof}
Consider the factor $pxq$ of $w$. By lemma~\ref{FP_tranform} we get $F_{p}\subseteq\{0,|x|\mod k,i+|x|\mod k, i+1+|x|\mod k,\ldots, k-2+|x|\mod k\}$. Arguing as in  case~3 of the theorem~\ref{increase} we easily obtain that either $|x| \mod k \notin F_p$ or $i + |x| \mod k \notin F_p$. Suppose that  position $|x| \mod k$ is not forbidden in $F_p$. Since $|F_p| = |F_q|$ it is necessary that $0 \in F_p$, therefore  $k - 1$ is also in $F_p$ by lemma~\ref{FP_zero}. Every $i \leq j \leq k - 3$ satisfies $j + |x| \not\equiv k - 1 \mod k$, since otherwise we would have $j + 1 + |x| \equiv 0 \mod k$, which would imply $|F_p| < |F_q|$. Therefore, $k - 2 + |x| \equiv k - 1\mod k$, whence $|x| \equiv 1 \mod k$ and $F_p = \{0, i + 1, \ldots, k - 1\}$. Assume now that position $ i + |x| \mod k \notin F_p$. Following the same argument as in the previous case, we get $k - 1 \in F_p$ and either $|x| \equiv k - 1 \mod k$ or $k - 2 + |x| \equiv k - 1 \mod k$. In the first case $F_p = \{0, i, \ldots, k - 3, k - 1\}$, and it has more than one block of consecutive forbidden positions, which contradicts the statement of lemma. In the latter case $|x| \equiv 1 \mod k$ and $F_p = \{0, 1, i+2, \ldots, k-1\}$.
\end{proof}

\begin{theorem}
The length of a minimal uncomletable word for $S_k$ is at least $5k^2 - 17k + 13$ for $k \ge 4$.
\end{theorem}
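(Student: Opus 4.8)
The plan is to match the upper bound from Theorem~\ref{non-complete} by tracing forbidden positions backward through a minimal uncompletable word $w$. By Theorem~\ref{minimal_border} we may assume $w$ begins with $u$, in which all $k$ positions $0,\ldots,k-1$ are forbidden, and ends with $v$, in which only position $0$ is forbidden. Reading the occurrences of $u$ and $v$ from the suffix toward the prefix, the size of the set of forbidden positions must therefore climb from $1$ to $k$, and by the remark following Lemma~\ref{FP_transform_overlap} it grows by at most $1$ between consecutive occurrences. The whole argument is an accounting of the word length consumed by this climb.

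First I would put the forbidden sets into a normal form: show that, apart from the small exceptional sets occurring right next to the suffix (namely $\{0\}$, $\{1\}$ and $\{0,k-1\}$, as at the start of the proof of Theorem~\ref{non-complete}), every set in the backward sequence is $\{0\}$ together with a single cyclic block of positions, possibly with the bottom block enlarged to $\{0,1\}$ as in Lemma~\ref{uv_special}. This is forced by combining Theorem~\ref{increase} with the equality Lemmas~\ref{uu=}, \ref{vu=} and~\ref{uv_special}, which between them pin down the outcome of every equal-size transition, while Lemma~\ref{vv-} shows that a $v\to v$ step strictly decreases the size and hence only lengthens $w$; this rules consecutive $v$'s out of a minimal word and enforces the strict alternation $\ldots v\,u\,v\,u\ldots$ seen in the construction.

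Next I would show that each unit increase of the size forces a full cycle of four transitions, exactly as in the optimal word. By Theorem~\ref{increase}, once position $0$ has been captured the only way to enlarge the block is a non-overlapping $u\to v$ step of type~(iii), which in the canonical situation costs gap $|x|=k-1$, i.e.\ traversed length $2k-1$. After such a step the forbidden set is a single block touching $k-1$, and to reach the two-block shape that type~(iii) again requires, the block must be rotated back around $\mathbb{Z}_k$: by Lemma~\ref{vu=} this needs an overlapping $v\to u$ step (traversed length $k-2$), by Lemma~\ref{uv_special} a $u\to v$ step of gap $1$ (length $k+1$), and a second overlapping $v\to u$ step (length $k-2$). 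These four steps contribute $(2k-1)+(k-2)+(k+1)+(k-2)=5k-4$ and raise the size by exactly one, matching the length $5k-4$ of one factor $r\,a^i\dt b^{k-2-i}\,r\,\dt$ of $\omega$. Summing these cycle costs over all the size increases, and handling the few boundary transitions adjacent to the suffix $v$ and the prefix $u$ separately, reproduces exactly the block decomposition $(k-3)(5k-4)+2k+1=5k^2-17k+13$ of the optimal word $\omega$ from the proof of Theorem~\ref{non-complete}.

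The hard part will be that the bound is global rather than local: individual transitions can often be made shorter than in $\omega$ — for instance a type-(iii) increase from a set $\{0,\ldots,i,j,\ldots,k-1\}$ with $i>0$ has gap only $k-1-i$ — so one cannot simply sum per-step minima. I would therefore prove that any local saving on one transition forces a compensating lengthening later, by tracking the cyclic position of the block modulo $k$: each type-(iii) increase rotates and extends the block, and a further increase is possible only after the block has been rotated back through a full residue cycle, which costs proportionally to $k$. Quantifying this rotation cost and combining it with the equality Lemmas~\ref{uu=}--\ref{uv_special}, together with careful treatment of the degenerate configurations next to the suffix $v$ (small sets) and the prefix $u$ (the full set), is where the real work lies, and is what pins the minimum at exactly the length of $\omega$.
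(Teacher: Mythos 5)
Your proposal follows the same architecture as the paper's proof: start from Theorem~\ref{minimal_border} (prefix $u$ with all $k$ positions forbidden, suffix $v$ with only position $0$), trace forbidden positions backward, note that the cardinality climbs from $1$ to $k$ by at most $1$ per step, and charge each unit increase a cost of roughly $5k-4$ traversed letters, with special treatment of the transitions near the two ends. Your arithmetic $(k-3)(5k-4)+2k+1=5k^2-17k+13$ is the same accounting as the paper's $3(k-1)+(4k-4)(k-4)+k(k-1)+k$, just organized per factor of $\omega$ rather than per gap between increasing occurrences.

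However, there is a genuine gap, and you name it yourself: the claim that each unit increase in the middle range really does force a full cycle of cost $5k-4$ is exactly the hard content of the theorem, and your proposal defers it to an unproven ``rotation cost'' argument (``any local saving on one transition forces a compensating lengthening later''). That global-compensation mechanism is not what makes the bound work, and as sketched it is not a proof. The paper avoids any global bookkeeping by choosing the right local unit: not a single transition (where, as you correctly observe, a type-(iii) increase can have a short gap), but the whole span $pxq$ between two \emph{consecutive increasing occurrences}. Inside such a span all intermediate occurrences have equal cardinality, so the equality lemmas apply with full force: Theorem~\ref{increase} forces $p=q=u$ with $F_q=\{0,i,\ldots,k-1\}$, $F_p=\{0,i-1,\ldots,k-1\}$; then Lemmas~\ref{uu=}, \ref{vu=}, \ref{vv-} and \ref{uv_special} successively rule out zero, one, or two intermediate occurrences and pin down the case of three intermediates to the pattern $v,u,v$ with forced residues for the gap lengths ($|z_1|\equiv k-1$, $|z_2|\equiv 1 \pmod k$, overlap with $q$), yielding the local bound $|x|\ge 4k-4$ outright. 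This elimination argument, together with the separate analysis of the $1\to2$, $2\to3$ and $(k-1)\to k$ increases at the two ends (each costing $k-1$ letters), is the substance missing from your proposal; without it you have a correct plan and the correct final number, but not a proof.
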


\begin{proof}
Let $w$ be an arbitrary minimal uncomletable word for $S_k$. By theorem~\ref{minimal_border} the word $u$ is a prefix of $w$ and $v$ is its suffix. Note that, we have $F_{u}=\{0,1,\ldots,k-1\}$ and $F_v=\{0\}$ in the aforementioned occurrences of $u$ and $v$.

If $s$ and $t$ are two consecutive occurrences from $\{u,v\}$ such that $|F_s| > |F_t|$, then we say that $s$ is an \emph{increasing occurrence}. Recall that by lemma~\ref{FP_tranform} it means that $|F_s| = |F_t| + 1$.  Since there is only one forbidden position in the last occurrence of $v$ in $w$ and $k$ forbidden positions in the first occurrence of $u$, there must be at least $k-1$ increasing occurrences in $w$. Now we are going to estimate the length of a factor between two consecutive such occurrences.
Consider a factor $pxq$ of $w$ such that $p$ and $q$ are the only increasing occurrences inside this factor. Note that, for an occurrence $r \in \{u,v\}$ in $pxq$, different from $p$ and $q$ (if any), $|F_r| = |F_q|$. Otherwise $p$ and $q$ are not consecutive increasing occurrences.

Let $3 \leq |F_q| < |F_p| \leq k$. Then by theorem~\ref{increase} we have $p = q = u$, $F_q = \{0, i, \ldots, k - 1\}$ and $F_p = \{0, i - 1, \ldots, k - 1\}$. Moreover, $p$ and $q$ are not the only occurrences from $\{u,v\}$ in $pxq$. Assume first $i> 2$, i.e. $|F_p|<k$.

Suppose that there is only one occurrence $r \in \{u,v\}$ in $pxq$ different from $p$ and $q$. Then from lemma~\ref{uu=} it follows that $r = v$ and since $|F_r| = |F_q|$, applying lemma~\ref{vu=} we obtain $F_r = \{0, i - 1, \ldots, k - 2\}$. But then since $i>2$ the set $F_r$ does not have the form required by theorem~\ref{increase} for the condition $|F_p| > |F_r|$ to hold true.

Assume now that $r_1,r_2 \in \{u,v\}$ are the two occurrences in $pxq$ different from $p$ and $q$. By the same argument as above $r_2 = v$ and $F_{r_2} = \{0, i - 1, \ldots, k - 2\}$. If $r_1 = u$, then on the one hand by theorem~\ref{increase} it should be $F_{r_1} = \{1, 2, \ldots, k - i + 1\}$. On the other hand by lemma~\ref{uv_special} position $0$ have to be forbidden in $F_{r_1}$, a contradiction. If $r_2 = v$, then by lemma~\ref{vv-} we have $|F_{r_1}| < |F_{r_2}|$, which is impossible. So there are at least three occurrences of words from $\{u,v\}$ in $pxq$ except $p$ and $q$.

Suppose that $r_1, r_2, r_3 \in \{u,v\}$. As we have already seen before $r_3 = v$ and $F_{r_3} = \{0, i - 1, \ldots, k - 2\}$. It immediately follows from lemma~\ref{vv-} that $r_2 = u$. Then $F_{r_2}$ is either $\{0,1, i + 1, i + 2, \ldots, k - 1\}$ or $\{0, i, i + 1, \ldots, k - 1\}$ by lemma~\ref{uv_special}. The latter case contradicts minimality of $w$, since we would have $|F_{r_2}|=|F_{q}|$.
Assume $r_1=u$, then by theorem~\ref{increase} we have $F_{r_1}=\{1,2,\ldots,k-i+1\}$. Let $r_1yr_2$ be a factor of $pxq$. Note that, $0\notin F_{r_1}$ and by lemma~\ref{FP_struct} position $i+1+|y|\mod k\notin F_{r_1}$. Thus $|F_{r_1}|<|F_{r_2}|$, a contradiction. Therefore $r_1=v$. This case is possible and exactly this situation takes place in the word presented in theorem~\ref{non-complete}.

Now we are going to estimate the length of $x$ in this case. By lemma~\ref{vu=} the factor $r_3$ overlaps with $q$, so we can factorize $x$ in the following way: either $x=y_1r_1y_2r_2y_3b^{k-2}$ or, if $r_1$ and $r_2$ overlap, $x=z_1b^{k-1}a^{k-1}z_2b^{k-2}$. By the proof of lemma~\ref{uv_special}, since $F_{r_2}=\{0,1, i + 1, i + 2, \ldots, k - 1\}$, we have $|y_3|\equiv |z_2|\equiv 1\mod k$. Let us first estimate the length of $z_1$. Since $r_1$ and $r_2$ overlap, lemma~\ref{FP_transform_overlap} implies $F_{r_1}=\{0,i,i+1\ldots,k-1\}$. Then by theorem~\ref{increase} there are $k-1\mod k$ letters between $p$ and $r_1$, thus $|z_1|\equiv k-1\mod k$. So in this case we have $|x|\ge 4k-4$. Now let us assume that $r_1$ and $r_2$ are not overlapping. By theorem~\ref{increase} the factor $r_1$ must have only one block of consecutive forbidden positions. By lemma~\ref{FP_tranform} we have $F_{r_1}\subseteq\{0, i+1+|y_2|\mod k, i+2+|y_2|\mod k,\ldots, k-1+|y_2|\mod k, |y_2|\mod k, |y_2|+1\mod k\}$. Note that, by lemma~\ref{FP_struct} either $i+1+|y_2|\mod k\notin F_{r_1}$ or $i+1+|y_2|\equiv 0\mod k$. So for the set $F_{r_1}$ to have only one block of consecutive forbidden positions and the same cardinality as $F_{r_2}$ we must have either $1+|y_2|\equiv k-1\mod k$ or $i+2+|y_2|\equiv 1\mod k$. In the first case we have $|y_2|\equiv k-2\mod k$ and $F_{r_1}=\{0,i,i+1,\ldots,k-1\}$, but this is the same as in the case of overlapping occurrences $r_1$ and $r_2$, so this is impossible in a minimal uncompletable word. In the second case we have $|y_2|\equiv k-i-1\mod k$ and $F_{r_1}=\{0,1,2,\ldots, k-i\}$. Then by theorem~\ref{increase} we conclude that $|y_1|\equiv i-1\mod k$. Therefore in this case we have $|x|\ge 4k-3$.

It is not hard to see that, if there are more than $3$ occurrences from $\{u,v\}$ different from $p$ and $q$ in $pxq$, then even if some of them overlap, the total length of the word $x$ is at least $4k-4$. So we conclude that $|x|\ge 4k-4$. Note that, $|x|=4k-4$ for the word from theorem~\ref{non-complete}.

Now let $i=2$, i.e. $F_p=\{0,1,\ldots,k-1\}$ and $F_q=\{0,2,3,\ldots,k-1\}$. Arguing as above, by lemmas~\ref{uu=} and \ref{vu=} there is an occurrence $r=v$ just before $q$, overlapping with $q$ and $F_r=\{0,1,\ldots,k-2\}$. If this is the only occurrence and $pxq=pyb^{k-2}q$, then by theorem~\ref{increase} we have $|y|\equiv 1\mod k$ and $F_p=\{0,1,\ldots,k-1\}$, so here we have $|x|\ge k+1$. More occurrences from $\{u,v\}$ inside $pxq$ will obviously give a longer factor $x$.

The previous argument implies that any minimal uncompletable word has prefix $pyb^{k-2}q$ with $p=q=u$ and $|y|\equiv 1\mod k$. From the symmetry property observed in theorem\ref{minimal_border} we deduce that any minimal uncompletable word has suffix $q'a^{k-2}\hat{y}p'$ with $p'=q'=v$ and $|\hat{y}|\equiv 1\mod k$. Clearly  $F_{p'}=\{0\}$. To calculate $F_{q'}$ note that, there is an occurrence $r=u$ overlapping with $q'$ and $F_r=\{1\}$. From theorem~\ref{increase} we deduce that $q'$ is an increasing occurrence and $F_{q'}=\{0,k-1\}$. Let $p$ be the next increasing occurrence. If $|F_p| = 3$, then the same theorem implies that $p=u$, $F_p=\{0,k-2,k-1\}$ and there are at least $k-1$ letters between $p$ and $q'$. If $|F_p| \leq 2$, then we would obviously require at least $k$ letters between $q'$ and an increasing occurrence with three forbidden positions.

Thus to increase the number of forbidden positions from $1$ in the suffix $v$ of $w$ to $2$ we need at least $k-1$ letters; from $2$ to $3$ -- at least $k-1$; from $\ell$ to $\ell+1$ for $3\le \ell\le k-2$ we need at least $4k-4$ letters, and finally, from $k-1$ to $k$ we need $k-1$ letters. Besides we have at least $k-1$ increasing occurrences and the suffix $v$ of $w$ with only one forbidden position. Thus the length of a minimal uncompletable word is at least $3(k-1)+(4k-4)(k-4)+k(k-1)+k=5k^2-17k+13.$
\end{proof}

\section{Conclusion}
The series of sets $S_k$ was found during exhaustive computational experiment. We searched for maximal with respect to inclusion non-complete sets among all the subsets of $\Sigma^{\leq 3}$; we were interested in such sets having longest possible minimal uncompletable word. We have found two extreme sets up to renaming letters and taking mirror image. Namely, $S_3 = \left(\Sigma^3\setminus\{baa,bba\}\right)\cup\left(\Sigma^{2}\setminus\{aa,bb\}\right)$ and $\left(\Sigma^3\setminus\{baa,bba\}\right)\cup\left(\Sigma^{2}\setminus\{ab,ba\}\right)$. Computation was based on representation of a set $S^*$ as a flower automaton and on the fact that $S$ is non-complete if and only if the corresponding non-deterministic automaton is synchronizing. Moreover, the set of uncompletable words coincides with the language of synchronizing words, see \cite{BePeReu09} and \cite{Prib09}  for more details. The same task for $k = 4$ was unfeasible for a typical laptop, so the search was performed with restriction $|\Sigma^{4} \cap S| \geq 11$. There is only one extreme non-complete set up to renaming letters and taking mirror image in this class. Namely, $\left(\Sigma^4\setminus\{aabb, abaa, abbb\}\right)\cup\left(\Sigma^{3}\setminus\{aba, bba, bbb\}\right)$. The length of a minimal uncompletable word for this example is $31$ compared to $25$ for the set $S_k$. So $S_k$ is not optimal even for $k = 4$. Thus, the lower bound $5k^2 - 17k + 13$ is likely to be improved. Nevertheless, the most interesting question whether the tight bound is quadratic remains open.

\end{document}